\newcolumntype{L}{>{$}l<{$}} 
\let\csname equation*\endcsname\relax
\let\csname endequation*\endcsname\relax
\newcommand{\eps}{\varepsilon}
\newcommand{\RR}{\mathds R}
\newcommand{\EE}{\mathds E}
\newtheorem{lemma}{Lemma}
\theoremstyle{remark}
\newtheorem{remark}{Remark}
\begin{document}

\title[]{Mean First Passage Times and Eyring-Kramers formula for Fluctuating Hydrodynamics}

\author{Jingbang Liu}
\address{Mathematics Institute, University of Warwick, Coventry
  CV4 7AL, United Kingdom}
\address{Department of Mathematics, University of Oslo, Oslo, 0851, Norway}
\eads{\mailto{jingbanl@.uio.no}}

\author{James E. Sprittles}
\address{Mathematics Institute, University of Warwick, Coventry
  CV4 7AL, United Kingdom}
\eads{\mailto{J.E.Sprittles@Warwick.ac.uk}}

\author{Tobias Grafke}
\address{Mathematics Institute, University of Warwick, Coventry
  CV4 7AL, United Kingdom}
\eads{\mailto{T.Grafke@warwick.ac.uk}}

\date{\today}

\begin{abstract}
  Thermally activated phenomena in physics and chemistry, such as
  conformational changes in biomolecules, liquid film rupture, or
  ferromagnetic field reversal, are often associated with
  exponentially long transition times described by Arrhenius' law. The
  associated subexponential prefactor, given by the Eyring-Kramers
  formula, has recently been rigorously derived for systems in
  detailed balance, resulting in a sharp limiting estimate for
  transition times and reaction rates. Unfortunately, this formula
  does not trivially apply to systems with conserved quantities, which
  are ubiquitous in the sciences: The associated zeromodes lead to
  divergences in the prefactor. We demonstrate how a generalised
  formula can be derived, and show its applicability to a wide range
  of systems, including stochastic partial differential equations from
  fluctuating hydrodynamics, with applications in rupture of nanofilm
  coatings and social segregation in socioeconomics.
\end{abstract}

\maketitle

\section{Introduction}


Metastability is a well-known phenomenon appearing in many areas of
natural sciences: A stochastic system spends a long time near some
typical configuration, but can rarely switch (transition) to a drastically
different configuration. The corresponding waiting times are known to
be exponentially large in the noise strength.

The typical picture is that of a stochastic diffusion in a potential
landscape, where local minima correspond to long-lived
states. Fluctuations can then push the system across a potential
barrier into another local minimum, in which it will remain for long
times. While chemical reactions, conformational changes in
biomolecules, protein folding, or magnetic field reversal in
ferromagnets are well-known examples of this phenomenon, similar ideas
can be found in rather broad areas of science, such as in brain
activity~\cite{brinkman-yan-maffei-etal:2022}, quenched disorder in
semiconductors~\cite{garcia-hofmann:2024}, tipping points in Earth's
climate~\cite{ashwin-heydt:2020} including warm-water currents in the
north Atlantic~\cite{lohmann-dijkstra-jochum-etal:2024}, or ecosystem
collapse~\cite{bashkirtseva-ryashko:2011}.

Concretely, take the gradient diffusion for $X_t\in\RR^n$,
\begin{equation}
  \label{eq:gradient-flow}
  dX_t = -\nabla U(X_t)\,dt + \sqrt{2\eps}\,dW_t\,,
\end{equation}
where $U:\RR^n\to \RR$ is the potential, and $W_t$ white-in-time
Brownian motion. For this system and in the limit of small noise,
$\eps\to0$, transitions between two local minima $x_-$ and $x_+$ of
$U(x)$, under some assumptions on the
potential~\cite{bovier-eckhoff-gayrard-etal:2004}, must happen through
a saddle point $x_s$, and the expected transition time $\tau$ is given
by the \emph{Eyring-Kramers} formula
\begin{equation}
  \label{eq:eyring-kramers-nomobility}
  \tau = \frac{2\pi}{|\lambda_-|}\sqrt{\frac{|\det H_s|}{\det H_-}} e^{\Delta U/\eps}
\end{equation}
asymptotically sharp in the limit of vanishing $\eps$. Here,
$H_-=\nabla\nabla U(x_-)$ and $H_s=\nabla\nabla U(x_s)$ are the
Hessian of the potential $U$ at the starting fixed point $x_-$ and at
the saddle $x_s$ , respectively, and $\lambda_-$ is the single
negative eigenvalue of $H_s$, corresponding to the single unstable
direction of the saddle point. The exponential scaling with the energy
barrier height $\Delta U = U(x_s)-U(x_-)$ is known as \emph{Arrhenius'
  law}~\cite{arrhenius:1889}, and can be made rigorous within sample
path large deviation theory as established
in~\cite{freidlin-wentzell:2012} in more general cases than just
gradient diffusions. The pre-exponential factor is also known for
almost a century~\cite{eyring:1935, kramers:1940}, but has only
recently been proven
rigorously~\cite{bovier-eckhoff-gayrard-etal:2004, berglund:2013,
  bouchet-reygner:2016, landim-seo:2018,
  landim-mariani-seo:2019}. Within these works, it is possible to
consider the more general case of a diffusion in a potential landscape
with \emph{mobility} $M(x):\RR^n\to \RR^{n\times n}$, which is
positive definite and symmetric, via
\begin{equation}
  \label{eq:gradient-flow-mobility}
  dX_t = -M(X_t)\nabla U(X_t)\,dt + \eps \nabla\cdot M(X_t)\,dt + \sqrt{2\eps} M_{1/2}(X_t)\,dW_t\,,
\end{equation}
where $M_{1/2}(x):\RR^n\to\RR^{n\times n}$ is the unique positive
definite matrix for which $M_{1/2} M_{1/2}^T = M$ and $(\nabla\cdot
M)_j (x) = \sum_{i} \partial_{x_i} M_{ij}(x)$. In this case, the
Eyring-Kramers formula reads~\cite{landim-mariani-seo:2019}
\begin{equation}
  \label{eq:eyring-kramers-theirs}
  \tau = \frac{2\pi}{\mu_-} \sqrt{\frac{|\det H_s|}{\det H_-}} e^{\Delta U/\eps}\,,
\end{equation}
where $\mu_-$ is the unique negative eigenvalue of the matrix $M(x_s)
H_s$.

While the generalised gradient diffusion with
mobility~(\ref{eq:gradient-flow-mobility}) can still be interpreted as
system that minimises the potential $U$, just with a position
dependent metric given by the mobility, it opens up a much wider class
of physical phenomena beyond the overdamped Langevin
equation~(\ref{eq:gradient-flow}). In particular, if further
generalizing to the functional setup and allowing generalised gradient
diffusions in function spaces or spaces of probability measures, it
includes hydrodynamic limits of interacting particle systems, lattice
gases, pedestrian dynamics, traffic flow, etc, all of which can be
seen as (functional) gradient flows of some entropy functional for a
(generalised) Wasserstein
metric~\cite{jordan-kinderlehrer-otto:2006}. For example, the large
number of particles limit of many non-interacting random
walkers is given by the stochastic diffusion equation for a density
$\rho(x,t)$,
\begin{equation*}
  \partial_t \rho = \Delta \rho + \sqrt{2\eps}\nabla\cdot(\sqrt{\rho}\eta)\,,
\end{equation*}
with $\eta$ spatio-temporal white noise, and $\eps=1/N$ for $N$
random walkers. It can be interpreted as a functional gradient flow
\begin{equation}
  \label{eq:functional-gradient-flow}
  \partial_t \rho = -M(\rho) \frac{\delta E[\rho]}{\delta \rho} + \sqrt{2\eps}M_{1/2}(\rho)\eta
\end{equation}
in the entropy landscape
\begin{equation*}
  E[\rho] = \int \rho\log \rho\,dx
\end{equation*}
and with mobility operator
\begin{equation}
  \label{eq:IPS-mobility}
  M(\rho)\xi = \nabla\cdot(\rho\nabla \xi)
\end{equation}
(and thus $M_{1/2}(\rho)\xi = \nabla\cdot(\sqrt{\rho}\xi)$). The main
point of this paper is to generalise the Eyring-Kramers
formula~(\ref{eq:eyring-kramers-theirs}) to generalised gradient
systems~(\ref{eq:functional-gradient-flow}) with conserved quantities.

\subsection{Main Result}
\label{sec:main-result}

The major problem in applying the Eyring-Kramers
formula~(\ref{eq:eyring-kramers-theirs}) to generalised gradient
systems of the form~(\ref{eq:functional-gradient-flow}) is that in
almost all cases of physical relevance, the mobility operator is not
positive definite, but instead features zero-eigenvalues corresponding
to conserved quantities. For example the mobility
in~(\ref{eq:IPS-mobility}) has a zero eigenvalue, with constant
functions being the corresponding eigenfunctions, that is associated
with conservation of particle number for the underlying particle
diffusion equation. This situation is generic in hydrodynamic limits,
which often conserve mass, momentum, energy, etc. Our main result is a
modification to the Eyring-Kramers formula, which corrects for the
conserved quantity. For a single conserved quantity, it is given by
\begin{equation}
  \label{eq:eyring-kramers-ours}
  \tau = \frac{2\pi}{\mu_-} \sqrt{\frac{|\det H_s|}{\det H_-}} \sqrt{\frac{\hat m\cdot H_s^{-1}\hat m}{ \hat m\cdot H_-^{-1}\hat m}} e^{\Delta U/\eps}
\end{equation}
where $\hat m$ is the vector normal to the conserved quantity submanifold,
and where stable fixed point $x_-$ and saddle point $x_s$ have to be
appropriately re-interpreted. An extension to multiple conserved quantities will also be provided in section~\ref{sec:cons-quant-mobil}.

We remark that while a non-invertible mobility operator leads to
divergences in the naive formula for the prefactor, a similar
situation may also occur on the level of a diffusion without mobility,
where the Hessian of the potential itself might be degenerate at the
saddle point or at the fixed point, as discussed for example
in~\cite{berglund-gentz:2010, berglund:2013}.

In the following, we will derive
equation~(\ref{eq:eyring-kramers-ours}) via a formal asymptotic
expansion. In particular, we will compute the asymptotics of the mean
first passage time in section~\ref{sec:mean-first-passage} through a
boundary layer analysis and Laplace asymptotics, incorporating the
complications of the conserved quantity. We will then demonstrate the
applicability of the formula by computing mean first passage times for
a simple toy model in section~\ref{sec:two-dimens-grad}, and to two
more realistic stochastic partial differential equations describing
liquid thin film rupture in section~\ref{sec:stoch-hydr-thin}, and
urban segregation in a socioeconomic model of social dynamics in
section~\ref{sec:soci-dynam-urban}.

\section{Mean First Passage Time and Laplace Asymptotics}
\label{sec:mean-first-passage}

In this section, we will derive
equation~(\ref{eq:eyring-kramers-ours}) via a formal asymptotic
expansion. We start with deriving a generic formula for a general
stochastic differential equation and perform its boundary layer
analysis in section~\ref{sec:asympt-expans-bound}. In
section~\ref{sec:laplace-asymptotics}, we then specialize to the case
of (non-degenerate) gradient flows with mobility~(\ref{eq:gradient-flow-mobility}) and apply Laplace asymptotics to
derive the well-known case of the literature. Lastly, in
section~\ref{sec:cons-quant-mobil}, we apply the same reasoning to the
case at hand, namely gradient flows with degenerate mobility, where a
conserved quantity yields a mobility matrix that is no longer positive
definite. This chain of arguments yields our newly proposed formula,
which we subsequently discuss in the context of stochastic
hydrodynamics in section~\ref{sec:funct-grad-flows}. While the
original chain of arguments could be phrased in the more rigorous
language of capacity theory as well, it is at this functional stage
that a rigorous proof is much harder to achieve, and we thus resort to
formal arguments throughout. We present our final full computational
scheme in section~\ref{sec:full-comp-scheme} that forms the basis for
our examples in section~\ref{sec:two-dimens-grad}
to~\ref{sec:soci-dynam-urban}.

Consider first the general stochastic differential equation for
$X_t\in\RR^n$,
\begin{equation}
  \label{eq:SDE-generic}
  dX_t = b(X_t)\,dt + \sqrt{\eps}\sigma(X_t)\,dW_t\,,
\end{equation}
where $b:\RR^n\to\RR^n$ is the deterministic drift,
$\sigma:\RR^n\to\RR^{n\times n}$ defines the noise covariance matrix
$a(x) = \sigma(x)\sigma^T(x)$, and $W_t$ is $n$-dimensional Brownian
motion. We assume the case where there is a stable fixed point
$x_-\in\RR^n$ such that $b(x_-)=0$ and the eigenvalues of $\nabla b(x_-)$
all have negative real part. We are interested in the time it takes
the process to first exit the basin of attraction $B$ of $x_-$ starting at
$x\in B$,
\begin{equation*}
  T_B(x) = \inf\{t>0\ |\ X_t\notin B\}\,.
\end{equation*}
$T_B(x)$ is a random variable, and its expectation $w_B(x)=\EE
T_B(x)$, the so-called \emph{mean first passage time}, fulfills the
inhomogeneous stationary Kolmogorov equation~\cite{gardiner:2009}
\begin{equation}
  \label{eq:BKE}
  \begin{cases}
    \mathcal L w_B(x) = -1 & \text{for } x\in B\\
    w_B(x) = 0 & \text{for } x\in\partial B\,,
  \end{cases}
\end{equation}
where $\partial B$ is the boundary of the basin of attraction of $x_-$,
for which $\hat n\cdot b(u)=0 \ \forall \ u\in\partial B$, with $\hat
n$ being the outwards pointing normal vector to $\partial B$. Here,
\begin{equation}
  \label{eq:generator}
  \mathcal L = b(x)\cdot \nabla + \tfrac12 \eps a(x):\nabla\nabla
\end{equation}
is the generator of the SDE~(\ref{eq:SDE-generic}), from which we can
deduce the invariant distribution $\rho_\infty(x)$ through the stationary
Fokker-Planck equation
\begin{equation*}
  \mathcal L^\dagger \rho_\infty = 0\,,
\end{equation*}
where $\mathcal L^\dagger$ is the $L^2$-adjoint of the generator.

In the case of the gradient flow~(\ref{eq:gradient-flow-mobility}),
the invariant distribution is given as Gibbs distribution through the
potential itself,
\begin{equation}
  \label{eq:gibbs}
  \rho_\infty(x) = C e^{-U(x)/\eps}\,.
\end{equation}
Further, there is a distinguished point $x_s$ on $\partial B$ for
transitions from \textcolor{red}{$x_-$} out of $B$, given through the \emph{barrier
  height}~\cite{berglund:2013},
\begin{equation*}
  \Delta U = \inf_{u\in\partial B}\left( U(u)-U(x_-)\right)
\end{equation*}
i.e.~the smallest potential barrier encountered by continuous curves
starting at $x_-$ and leaving through $\partial B$. The point at which
this barrier is taken, $x_s\in \RR^n$, is assumed to be a saddle point
with a single unstable direction, i.e.~$\nabla U(x_s)=0$ and $M(x_s)
\nabla\nabla U(x_s)$ having exactly one negative eigenvalue $\mu_-$,
and $n-1$ positive eigenvalues. In general there might be multiple
saddles, all of which are dominated by $x_s$, which is therefore
called the \emph{relevant saddle}. In the following, we write
$M_s=M(x_s)$.

From large deviation theory~\cite{freidlin-wentzell:2012} it is then
known that
\begin{equation}
  \label{eq:kramers}
  w_b(x_-) \asymp e^{\eps^{-1}(U(x_s)-U(x_-))}\,,
\end{equation}
which determines the exponential part of the mean first passage
time, recovering \emph{Arrhenius' law}~\cite{arrhenius:1889}. The
purpose of the Eyring-Kramers law, and the goal of this paper, is to
go beyond this mere exponential scaling law, and get sharp asymptotics
of the prefactor omitted in~(\ref{eq:kramers}).

\subsection{Asymptotic expansion and boundary layer analysis}
\label{sec:asympt-expans-bound}

\begin{figure}
  \begin{center}
    \includegraphics[width=80pt]{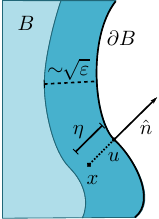}
  \end{center}
  \caption{Schematic depiction of the boundary layer along $\partial
    B$: In an $\mathcal O(\sqrt{\eps})$ vicinity, a point $x\in B$ is
    expanded along normal direction $\hat n$ from boundary point
    $u\in\partial B$, with local coordinate $\eta>0$.}
  \label{fig:boundary-layer}
\end{figure}

In order to get access to the prefactor, and loosely
following~\cite{gardiner:2009}, we assume $w_B(x)\asymp e^{K/\eps}$ as
estimated by large deviation theory~(\ref{eq:kramers}) and work with
\begin{equation*}
  \tau(x) = e^{-K/\eps} w_B(x)\,,
\end{equation*}
which fulfills, via~(\ref{eq:BKE}), the Kolmogorov equation
\begin{equation*}
  \begin{cases}
    \mathcal L \tau(x) = -e^{-K/\eps} & \text{for } x\in B\\
    \tau(x) = 0 & \text{for } x\in\partial B\,,
  \end{cases}
\end{equation*}
so that for $\eps\to0$ the right hand side vanishes and the Kolmogorov
equation becomes homogeneous. Since the diffusive term in the
generator is $\mathcal O(\eps)$ as well, we can assume that within $B$
the variable $\tau(x)$ is merely advected and thus constant,
$\tau(x)=C_0$, to leading order in $\eps$. We need to consider only
the behavior in a small $\mathcal O(\sqrt{\eps})$ boundary layer near
$\partial B$.

For a point $x$ near the boundary $\partial B$, we choose coordinates
\begin{equation*}
  x = u - \sqrt{\eps} \eta \hat n\,,
\end{equation*}
for $\eta>0$ and $u\in\partial B$, compare
figure~\ref{fig:boundary-layer}. In this boundary layer, to leading
order, we therefore have
\begin{equation*}
  \mathcal L \approx (b(x)\cdot\hat n)\frac1{\sqrt{\eps}}\partial_\eta + \underbrace{\hat n\cdot a(x)\hat n}_{\alpha(u)} \partial_\eta^2\,.
\end{equation*}
Since $x$ is $\mathcal O(\sqrt{\eps})$-close to $u\in\partial B$, we can expand
\begin{equation*}
  b(x)\cdot \hat n = \underbrace{b(u)\cdot \hat n}_{=0} + \hat n \cdot \nabla b(u)(x-u) + \mathcal O(|x-u|^2)\,,
\end{equation*}
and introduce the additional quantity $\beta(u)$ through
\begin{equation*}
  \hat n \cdot\nabla b(u) (x-u) = -\sqrt{\eps}\eta \underbrace{\hat n\cdot \nabla b(u)\hat n}_{\beta(u)}\,.
\end{equation*}
We will later see that $\beta(u)$ at the saddle $u=x_s$ is related to
the unstable eigenvalue $\mu_-$. Now, all terms are $\mathcal
O(\eps^0)$ and we arrive at
\begin{equation}
  \label{eq:wkb-eta}
  0=\mathcal L\tau(\eta) = \eta \beta(u)\partial_\eta \tau(\eta) + \alpha(u)\partial_\eta^2\tau(\eta)\,,
\end{equation}
with $\alpha, \beta$ being the leading-order contributions of the
normal terms of $a$ and $b$, respectively. Equation~(\ref{eq:wkb-eta})
is solved by
\begin{equation*}
  \tau(\eta) = C_1(u) \int_0^\eta e^{-\frac{\beta(u)}{2\alpha(u)} \eta^2}\,d\eta\,.
\end{equation*}
Since we know the limit $\tau(\eta)\xrightarrow{\eta\gg1}C_0$, we must
have the matching condition
\begin{equation*}
  C_1(u) = C_0 \sqrt{\frac{2\beta(u)}{\pi\alpha(u)}}\,,
\end{equation*}
from which we can obtain $C_0$. The connection between the bulk and
the boundary can be exploited when integrating $\mathcal
L\tau(x)=-e^{-K/\eps}$ against the invariant density $\rho_\infty(x)$,
and integrating by parts,
\begin{align*}
  -e^{-K/\eps} \int_B \rho_\infty(x)\,dx &= \int_B\rho_\infty(x) \mathcal L \tau(x)\,dx\\
  &= \int_B\underbrace{(\mathcal L^\dagger \rho_\infty)}_{=0}\tau(x)\,dx + \\
  &\ \ \int_{\partial B}\left(\rho_\infty (\hat n\cdot b(u))\underbrace{\tau(u)}_{\makebox[0pt]{\footnotesize{$\tau=0\,\text{on}\,\partial B$}}} + \eps\big(\rho_\infty \hat n\cdot a(u)\nabla \tau(u) - \underbrace{\tau(u)}_{\makebox[0pt]{\footnotesize{$\tau=0\,\text{on}\,\partial B$}}}\hat n\cdot a(u)\nabla \rho_\infty\big)\right)\,du\\
  &= -\sqrt{\eps}\int_{\partial B} \rho_\infty(u) \alpha(u) \partial_\eta \tau\,du\\
  &= -\sqrt{\frac{2\eps}{\pi}} C_0 \int_{\partial B} \rho_\infty(u)\sqrt{\alpha(u)\beta(u)}\,du
\end{align*}
so that
\begin{equation*}
  C_0 = e^{-K/\eps} \sqrt{\frac{\pi}{2\eps}} \frac{\int_B \rho_\infty(x)\,dx}{\int_{\partial B} \rho_\infty(u)\sqrt{\alpha(u)\beta(u)}\,du}\,.
\end{equation*}
We conclude that in the interior,
\begin{equation}
  \label{eq:wb-general}
  w_B(x_-) = \sqrt{\frac{\pi}{2\eps}} \frac{\int_B \rho_\infty(x)\,dx}{\int_{\partial B} \rho_\infty(u)\sqrt{\alpha(u)\beta(u)}\,du}\,.
\end{equation}

\subsection{Laplace asymptotics}
\label{sec:laplace-asymptotics}

Note that so far we have not made use of the fact that our system is a gradient flow with non-degenerate mobility~(\ref{eq:gradient-flow-mobility}), and that result~(\ref{eq:wb-general}) is asymptotically correct for $\eps\ll 1$ for arbitrary systems. We can now make use of our
explicit knowledge of the invariant measure~(\ref{eq:gibbs}) to apply
Laplace asymptotics to the volume and boundary integrals
in~(\ref{eq:wb-general}). Concretely, that means that in the
numerator, we can approximate
\begin{equation*}
  \int_B\rho_\infty(x)\,dx \approx \frac{(2\pi\eps)^{n/2}}{\sqrt{\det H_-}} e^{-U(x_-)/\eps}\,,
\end{equation*}
since $x_-$ is the minimum of $U$ within $B$, while in the denominator
\begin{equation*}
  \int_{\partial B} \rho_\infty(u) \sqrt{\alpha(u)\beta(u)}\,du \approx \frac{(2\pi\eps)^{(n-1)/2}}{\sqrt{|\det H_s|}}\sqrt{\alpha(s)\beta(s)} (\hat n\cdot H_s^{-1}\hat n)^{-1/2} e^{-U(x_s)/\eps}\,,
\end{equation*}
where the $(\hat n\cdot H_s^{-1}\hat n)$-term comes from the fact that
we integrate the Gaussian integral only over the tangent space to the
separatrix at the saddle, $T_{x_s}\partial B$, as derived in the
appendix in lemma~\ref{lm:detp}. In total, this yields
\begin{equation*}
  w_B(x_-) = \pi \sqrt{\frac{\hat n\cdot H_s^{-1}\hat n}{\alpha(x_s)\beta(x_s)}} \sqrt{\frac{|\det H_s|}{\det H_-}} e^{\Delta U/\eps},
\end{equation*}
where we recall that $\alpha(x_s) = \hat n \cdot M_s\hat n$ and $\beta(x_s) =\hat n\cdot M_s H_s\hat n$, and where the $\mathcal O(\eps)$-part of the drift term $b(x) = -M(x)\nabla U(x) + \eps
\nabla\cdot M(x)$ is subdominant and thus dropped. Here we write $M(x_s)=M_s$, $\nabla\nabla U(x_-)=H_-$ and $\nabla\nabla U(x_s)=H_s$.

Using lemma~\ref{lm:beta} and~\ref{lm:alpha} of the appendix, we
recognise that $\alpha(x_s)$ and $\beta(x_s)$ are connected to $\mu_-$
via
\begin{equation*}
  \beta(x_s) = \frac{\alpha(x_s)}{\hat n\cdot H_s^{-1}\hat n} = \mu_-\,,
\end{equation*}
where $\mu_-$ is the unique negative eigenvalue of $M_s H_s$. We
arrive at the final result
\begin{equation*}
  w_B(x_-) = \frac{\pi}{\mu_-}\sqrt{\frac{|\det H_s|}{\det H_-}} e^{\Delta U/\eps}\,.
\end{equation*}
This demonstrates the capacity theory result from the
literature~\cite{landim-seo:2018} for the case of gradient flows with
position-dependent mobility.

\subsection{Conserved quantities of the mobility matrix}
\label{sec:cons-quant-mobil}

We now consider the case where the system has a conserved quantity,
understood in the sense that the mobility matrix $M(x):\RR^n \to
\RR^{n\times n}$ is no longer positive definite, but positive
semi-definite. In other words, for each $x\in\RR^n$, there exists a
number of zero eigenvalues of $M(x)$, and $M(x)$ is no longer full
rank. As a consequence, since the mobility acts in front of both the
deterministic drift and the stochastic noise, the degrees of freedom
associated with the zero eigenvalues are never changed, and remain a
constant of integration. The concrete value of the conserved quantity
and their nature depends on both the mobility matrix and the initial
condition of the system.

\begin{figure}
  \begin{center}
    \includegraphics[width=0.5\textwidth]{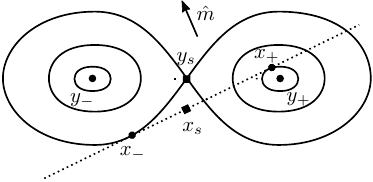}
  \end{center}
  \caption{Conserved quantities of the stochastic process: If $M(x)$
    has a zero eigenvalue with eigenvector $\hat m$, then the gradient
    diffusion~(\ref{eq:gradient-flow-mobility}) will remain
    constrained to a submanifold $S\in\RR^n$ (dotted line) with normal
    vector $\hat m$. Instead of the actual fixpoints
    $\{y_-,y_s,y_+\}$, the relevant points are now the corresponding
    fixed points $\{x_-,x_s,x_+\}$ of the dynamics constrained to the
    submanifold $S$.\label{fig:conservedquantities}}
\end{figure}

For simplicity, we consider a single conserved quantity, so that
$M(x)$ has a unique zero eigenvalue for all $x\in\RR^n$ with
normalised eigenvector $\hat m(x)$, while all its other eigenvalues
are strictly positive. The process~(\ref{eq:gradient-flow-mobility})
then remains constrained to an $(n-1)$-dimensional sub-manifold
$S\subset\RR^n$, with normal vector field $\hat m(x)$, since neither
the gradient drift nor the stochastic force can ever have a
contribution in the direction of $\hat m$.

This situation is depicted in
figure~\ref{fig:conservedquantities}. Note that, while the original
fixed points (both stable, $y_\pm$, and saddle $y_s$) of the system
still exist, they do not lie within $S$. Instead, the effective stable
points $x_+$ and $x_-$ that should be considered for the Laplace
asymptotics are no longer (local) minima of the potential $U(x)$, but
instead are only local minima of the potential constrained to the
submanifold $S$, so that $\nabla U(x_-) \parallel \hat m$ instead of
$\nabla U(x_-)=0$. The same is true for the effective saddle point
$x_s$, which is no longer a proper saddle of $U(x)$. Therefore, the
new basin of attraction, $\tilde B$, is now a subset of $S$ instead of
all of $\RR^n$, as is its boundary, $\partial \tilde B$. As before, we
write $M_s := M(x_s)$, $H_s = \nabla\nabla U(x_s)$ and $H_-=
\nabla\nabla U(x_-)$.

In fact, all arguments made in sections~\ref{sec:asympt-expans-bound}
and~\ref{sec:laplace-asymptotics} go through with the minor
modification of operating in the $(n-1)$-dimensional tangent spaces
$T_{x_-}S$ and $T_{x_s}S$ around the stable point and the saddle
instead of all of $\RR^n$. This is possible in
particular because at the saddle point $x_s$, the space of conserved
quantities $T_{x_s}S$ cannot be parallel to the separatrix $\partial
B$, or in other words $\hat n(x_s) \nparallel \hat m(x_s)$ (where we
recall that $M_s\hat m(x_s)=0$ and $H_s M_s \hat n(x_s) =
\mu^+$). In fact, as shown in lemma~\ref{lm:Hperp}, $\hat n$ and $\hat
m$ are perpendicular in the $H_s^{-1}$ inner product, which simplifies
the integration. Following similar arguments in section~\ref{sec:asympt-expans-bound} one can derived the mean first passage time with a conserved quantity to be
\begin{equation}
  \label{eq:wb-general-conserved}
  w_B(x_-) = \sqrt{\frac{\pi}{2\eps}} \frac{\int_{\tilde{B}} \rho_\infty(x)\,dx}{\int_{\partial \tilde{B}} \rho_\infty(u)\sqrt{\alpha(u)\beta(u)}\,du}\,.
\end{equation}
The volume and boundary integrals in (\ref{eq:wb-general-conserved}) can be evaluated using the invariant measure~(\ref{eq:gibbs}) and the Laplace approximation as in section~\ref{sec:laplace-asymptotics}. Note that additional correcting factors need to be introduced when we apply the Laplace
method and integrate 
over $\tilde B\subset S$ and $\partial \tilde B$, as shown by lemmas~\ref{lm:detp} and~\ref{lm:detpp} in the appendix. In particular, we get
\begin{align*}
  \int_{\tilde{B}}\rho_{\infty}(x)dx &= \int_{\tilde B} e^{-U(x)/\eps}\,dx \stackrel{\eps\to0}{=} e^{-U(x_-)/\eps} \int_{T_{x_-}\tilde B} e^{-\tfrac12 x\cdot H_- x}\,dx\\
  &= \sqrt{\frac{(2\pi\eps)^{n-1}}{\det H_-}}\left|\hat m \cdot H_-^{-1}\hat m\right|^{-1/2} e^{-U(x_-)/\eps} 
\end{align*}
at the stable fixed point, and
\begin{align*}
  \int_{\partial \tilde B} \sqrt{\alpha(u)\beta(u)} \rho_{\infty}(u)\,du &= \int_{\partial \tilde B} \sqrt{\alpha(u)\beta(u)} e^{-U(u)/\eps}\,du \\
  &\stackrel{\eps\to0}{=} \sqrt{\alpha(x_s)\beta(x_s)} e^{-U(x_s)/\eps} \int_{T_{x_s}\partial \tilde B} e^{-\tfrac12 u\cdot H_s u}\,du\\
  &= \sqrt{\frac{(2\pi\eps)^{n-2}}{\det H_-}}\frac{\sqrt{\alpha(x_s)\beta(x_s)}}{\left|\hat m \cdot H_s^{-1}\hat m\right|^{1/2}\left|\hat n \cdot H_s^{-1}\hat n\right|^{1/2}} e^{-U(x_s)/\eps} 
\end{align*}
at the saddle point. Recall $\alpha(x_s) = \hat n \cdot M_s\hat n$, $\beta(x_s) =\hat n\cdot M_s H_s\hat n$, and using lemma~\ref{lm:beta} and~\ref{lm:alpha} of the appendix, we arrive at our final result
\begin{equation}
  \label{eq:eyring-kramers-final}
  \tau = \frac{2\pi}{\mu_-} \sqrt{\frac{|\det H_s|}{\det H_-}} \sqrt{\frac{\hat m\cdot H_s^{-1}\hat m}{ \hat m\cdot H_-^{-1}\hat m}}\  e^{\Delta U/\eps},
\end{equation}
where we additionally used the fact that the expected time of
transitions $\tau$ is twice the expected time to exit, $w_B(x_-)$.

\begin{remark}
  If we define as $A\big|_V$ the restriction of an operator
  $A:\RR^n\to\RR^n$ to a subspace $V\subset\RR^n$, i.e.~$A\big|_V :
  V\to\RR^n$, then equation~(\ref{eq:eyring-kramers-final}) can be
  equivalently written via determinants of the Hessians restricted to
  the tangent spaces of the conserved manifold at the two relevant
  points,
  \begin{equation}
    \label{eq:eyting-kramers-final-restricted}
    \tau = \frac{2\pi}{\mu_-} \sqrt{\frac{\left|\det \left(H_s\big|_{T_{x_s}S}\right)\right|}{\det \left(H_-\big|_{T_{x_-}S}\right)}} \ e^{\Delta U/\eps}\,.
  \end{equation}
  While notationally more pleasing, this formulation is less readily
  implementable numerically, as it necessitates finding a basis for
  the tangent spaces and expressing the Hessians in this basis, while
  equation~(\ref{eq:eyring-kramers-final}) simply corrects for the
  single conserved quantity under knowledge of the vector $\hat m$.
\end{remark}

\begin{remark}
  Via a repeated application of our arguments (see
  lemma~\ref{lm:multiConserve}), one can generalize
  equation~(\ref{eq:eyring-kramers-final}) to multiple conserved
  quantities by choosing an appropriate basis for the space of conserved
  quantities. Concretely, for vectors $\{\hat m_1,\ldots,\hat m_k\}$
  normal to the conserved manifold $S$, under the assumption that
  $m_i$ are orthogonal in the $H^{-1}$ inner product, we obtain
  \begin{equation*}
    \tau = \frac{2\pi}{\mu_-} \sqrt{\frac{|\det H_s|}{\det H_-}} \sqrt{\frac{\hat m_1\cdot H_s^{-1}\hat m_1}{ \hat m_1\cdot H_-^{-1}\hat m_1}}\cdots \sqrt{\frac{\hat m_k\cdot H_s^{-1}\hat m_k}{ \hat m_k\cdot H_-^{-1}\hat m_k}}\  e^{\Delta U/\eps}\,.
  \end{equation*}
  The variant via restricted Hessians,
  equation~(\ref{eq:eyting-kramers-final-restricted}), remains
  unchanged in this case.
\end{remark}

\begin{remark}
  Of course it might be simpler, in particular in finite dimensional
  systems, to consider instead of the original stochastic evolution
  equation a reduced equation that eliminates variables to enforce the
  conservation constraint explicitly. For example, for a chemical reaction
  transforming molecule $A$ into $B$ and back, but with the total
  number $A+B$ conserved, one could instead consider stochastic
  dynamics in the difference $A-B$. While sometimes this approach is
  practical, and it must lead to identical results, it often produces
  complicated equations, in particular in the functional setting.
\end{remark}

\subsection{Functional gradient flows and stochastic hydrodynamics}
\label{sec:funct-grad-flows}


While the above discussion and derivation focuses on gradient flows
in $\RR^n$, following the work
in~\cite{jordan-kinderlehrer-otto:2006}, it has been realised that a
vast array of systems that originate from macroscopic limits of
microscopic interacting particle systems can similarly be interpreted as
gradient flows, on the space of probability measures, and as
generalised Wasserstein-gradient flows of an entropy functional. The
easiest example is the many-particle limit of non-interacting Brownian
walkers, in the large particle limit, $N\to\infty$, but interactions
with external forces, surrounding fluids, or inter-particle
interactions can be incorporated as well. For finite but large number
of particles, $N\gg 1$, one expects fluctuations of the order
$1/\sqrt{N}$ and arrives at a \emph{stochastic} evolution equation in
the form of a stochastic partial differential equation (SPDE),
generally summarised under the notion of \emph{fluctuating
  hydrodynamics}~\cite{bedeaux-mazur:1974, dean:1996,
  landau-lifshitz:2007}. If the underlying microscopic model is in
detailed balance, so is the resulting stochastic hydrodynamics
equation. For the example of $N=1/\eps$ non-interacting random walkers,
\begin{equation*}
  dX_i(t) = \sqrt{2D} dW_i(t)\,,
\end{equation*}
the limiting SPDE for the density $\rho(x,t)$ of walkers is formally
given by
\begin{equation*}
  \partial_t \rho(x,t) = D\Delta \rho(x,t) + \sqrt{2D\eps} \partial_x (\sqrt{\rho(x,t)} \eta(x,t))\,,
\end{equation*}
which is a functional gradient flow
\begin{equation*}
  \partial_t \rho = -M(\rho) \frac{\delta E[\rho]}{\delta \rho} + \sqrt{2\eps} M_{1/2}(\rho) \eta\,,
\end{equation*}
with
\begin{equation*}
  E[\rho] = \int \rho\log\rho \,dx\,,\quad\text{and}\quad M(\rho)\xi = D\nabla\cdot(\rho\nabla \xi)\,.
\end{equation*}
The above limiting equation is formal, and there is considerable
effort involved in making this intuition rigorous, in particular for
nonlinear equations and in higher dimensions. The precise mathematical
interpretation of the resulting SPDEs is subject to active
research~\cite{fehrman-gess:2023,
  djurdjevac-kremp-perkowski:2024}. This includes, but is not limited
to, the interpretation of the $\mathcal O(\eps)$-divergence term
in~(\ref{eq:gradient-flow-mobility}), which for many nonlinear
equations diverges and requires renormalization. For the purposes of
this paper, we retreat to the notion that ultimately, every numerical
computation relies on discretization and hence an ``UV'' cutoff that
regularises any divergences. In systems of physical meaning, such a
cut-off can be naturally justified as length scale where the continuum
limit breaks down, such as the size of a molecule for a fluid. In this
sense, any spatially continuous SPDE is to be interpreted as a
notational shorthand for a discrete system with an appropriate
physical cut-off length scale.

\subsection{Full computational scheme}
\label{sec:full-comp-scheme}

Given the above derivation, we now have a complete recipe for
computing mean first passage times for metastable stochastic
hydrodynamics. Concretely, in order to estimate the mean first passage
time out of a locally stable configuration, we apply the following steps:
\begin{enumerate}
\item Compute the saddle point (for example via edge tracking or
  gentlest ascent dynamics (GAD)~\cite{e-zhou:2011}) constrained to the
  submanifold restriction, respecting the conserved quantities.
\item Compute the Hessian around the effective saddle and stable fixed
  point by discretizing the continuous operator via some spatial
  discretization scheme.
\item Compute the spectrum of this Hessian, and correct for its action
  in conserved normal direction (the subspace perpendicular to mass
  conservation).
\end{enumerate}
The result will be a quantitative estimate for the mean first passage
time for the small noise limit. Notably, there is no fitting parameter
or additional assumption. The computation has to be done only once,
and can then be used for any noise strength $\eps$ (but of course will
be more accurate for smaller $\eps$).

In the following section, we will demonstrate the applicability of
this scheme to a number of examples, starting with a two-dimensional
and easy to visualise toy example in
section~\ref{sec:two-dimens-grad}, and then two stochastic partial
differential equations motivated from interacting particle systems and
stochastic hydrodynamics: the rupture time for liquid thin films in
section~\ref{sec:stoch-hydr-thin}, and the a socio-economic model of
urban separation in section~\ref{sec:soci-dynam-urban}.

\section{Two-dimensional gradient flow}
\label{sec:two-dimens-grad}

As a simple and easy to visualise example, we first consider a
double-well for $(x,y)\in\RR^2$ given by
\begin{equation}
  \label{eq:doublewell}
  U(x,y) = \tfrac14(1-x^2)^2  +\tfrac12y^2(x^2+\tfrac14)\,.
\end{equation}
While this potential has two minima, at $(-1,0)$ and $(1,0)$, and a
saddle at $(0,0)$, we want to modify the gradient flow with the
mobility matrix
\begin{equation*}
  M(x,y) = \tfrac12(1+x^2)\hat p \hat p^T,
\end{equation*}
for a normalised vector $\hat p\in\RR^2$. Since $M(x)$ has a zero
eigenvalue with corresponding eigenvector $\hat m = (\hat p)^\perp$,
the gradient flow
\begin{equation}
  \label{eq:2d-gradient}
  d(X_t,Y_t) = -M(X_t,Y_t)\nabla U(X_t,Y_t)\,dt + \eps \nabla\cdot M(X_t,Y_t) +   \sqrt{2\eps} M_{1/2}(X_t,Y_t)\,(dW_{x}, dW_{y})
\end{equation}
will always remain confined to the subspace
\begin{equation*}
  S = \{ (x,y)\in\RR^2 |\ (x\ y)\cdot \hat m = k\}\,.
\end{equation*}
In other words, the quantity $k = (x\ y)\cdot \hat m$ is a conserved
quantity of equation~(\ref{eq:2d-gradient}), similar to how mass is
conserved in fluctuating hydrodynamic equations. Its value throughout
remains that of the initial conditions of~(\ref{eq:2d-gradient}). This
situation is depicted in figure~\ref{fig:2d-example} (left).

\begin{figure}
  \begin{center}
    \includegraphics[width=0.45\textwidth]{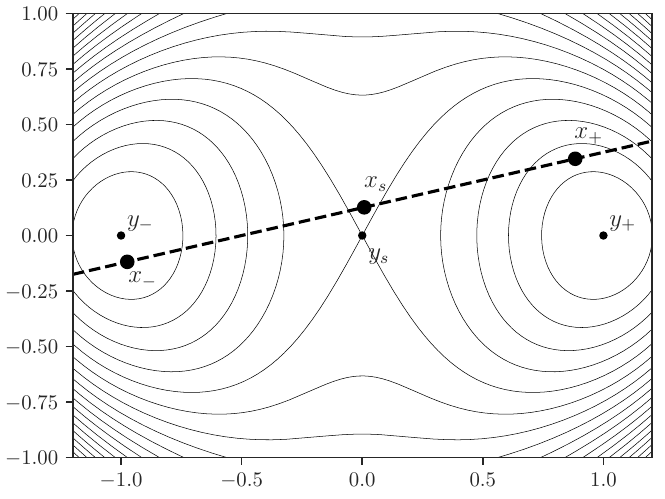}
    \includegraphics[width=0.45\textwidth]{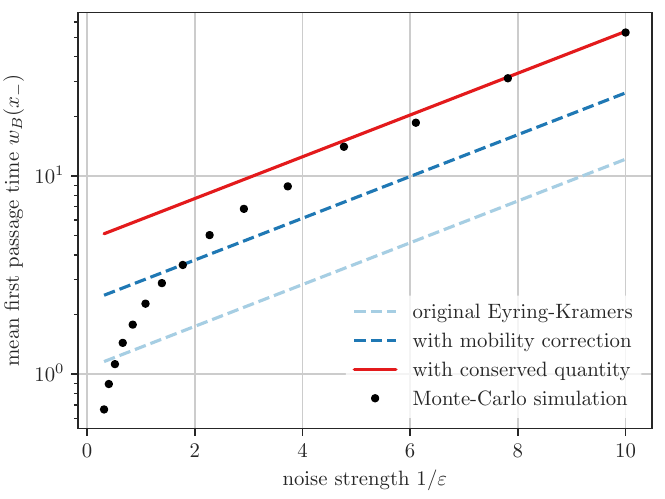}
  \end{center}
  \caption{\textbf{Left:} Doublewell~(\ref{eq:doublewell}) with conserved quantity. While
    the full double-well, with minima $y_\pm$ and saddle $y_s$, adheres
    to the original Eyring-Kramers
    formula~(\ref{eq:eyring-kramers-theirs}), the actual
    system~(\ref{eq:2d-gradient}) has a conserved quantity,
    restricting it to the dashed subspace. Not only does this result
    in different minima $x_\pm$ and saddle $x_s$ of the restricted
    system, but the ratio of Hessians in the Eyring-Kramers formula becomes
    incorrect, as it considers curvatures into suppressed
    directions. \textbf{Right:} Time to leave the basin of attraction
    of $x_-$ as a function of the noise amplitude $\eps$. Dots show
    the result of $1000$ numerical simulations of
    (\ref{eq:2d-gradient}) each, compared to the original
    Eyring-Kramers formula~(\ref{eq:eyring-kramers-nomobility}) (light
    blue dashed), the formula~(\ref{eq:eyring-kramers-theirs}) taking
    into account the mobility matrix (dark blue dashed), and finally
    our formula~(\ref{eq:eyring-kramers-final}) further taking into
    account the conserved quantities (red solid). Clearly, both
    corrections are needed to explain the observed
    times.\label{fig:2d-example}}
\end{figure}

For a numerical experiment to demonstrate the correctness of our
formula, we concretely pick $p=(1,\tfrac14)$ and $\hat p = p/|p|$, and
initialise with the conserved quantity set to $k=\tfrac18$. With these
values, we can numerically measure the mean time it takes to exit the
basin of attraction of the left well, and compare the results to our
formula in
section~\ref{sec:mean-first-passage}. The equation~\ref{eq:2d-gradient} is solved using a fourth order Runge-Kutta method~\cite{kasdin:1995} with timestep $dt=5\cdot 10^{-3}$. Figure~\ref{fig:2d-example}
(right) shows the results of the Monte-Carlo experiment, simulating
$N=1000$ samples for each value of $\eps$ and averaging the observed
time to exit the basin of attraction of the left well. This is
compared against the original Eyring-Kramers
formula~(\ref{eq:eyring-kramers-nomobility}) considering only the
properties of the Hessian (light blue dashed), the generalised
Eyring-Kramers formula including the correction from the mobility
operator, equation~(\ref{eq:eyring-kramers-theirs}) (dark blue
dashed), and lastly our final result~(\ref{eq:eyring-kramers-final})
which further considers the correction of the restriction to the
conserved subspace. As can be seen, the change of eigenvalue from
$\lambda_-$ to $\mu_-$, as well as the conserved quantity correcting
factor $\sqrt{(\hat m\cdot H^{-1}_s\hat m)/(\hat m\cdot H^{-1}_-\hat
  m)}$, both lead to a correcting factor of roughly 2, and both
corrections are needed in order to explain the observed result. We
stress that in order to obtain the fully corrected Eyring-Kramers
law~(\ref{eq:eyring-kramers-final}), a single computation needs to be
done for a prediction for all $\eps$, and without any fitting
parameter. The small noise limit, $\eps\ll 1$, appears to work
reasonably well already for values $\eps<\tfrac14$.

\section{Stochastic Hydrodynamics and Thin Film Rupture}
\label{sec:stoch-hydr-thin}

The stability of nanoscale thin liquid films on solid substrates plays
a key role in many applications including
coating~\cite{weinstein-ruschak:2004}, nanofluidic
transistors~\cite{karnik-fan-yue-etal:2005} and
nanomanufacturing~\cite{makarov-milichko-mukhin-etal:2016}. It has
been observed both
experimentally~\cite{herminghaus-jacobs-mecke-etal:1998,xie-karim-douglas-etal:1998,seemann-herminghaus-jacobs:2001}
and numerically with molecular dynamics
simulations~\cite{nguyen-fuentes-cabrera-fowlkes-etal:2014,zhang-sprittles-lockerby:2019,sprittles-liu-lockerby-etal:2023}
that initially flat films would rupture spontaneously, as shown in
figure~\ref{fig:STFE} (left). The classical explanation for the
rupture is due to the competition between the disjoining pressure
(otherwise known as the van der Waals forces) and the surface tension,
and a linear stability analysis~\cite{ruckenstein-jain:1974} further
reveals a critical wavelength above which the wave modes are linearly
unstable, eventually leading to rupture. However, subsequent
observations~\cite{seemann-herminghaus-jacobs:2001} have revealed a
larger set of regimes, one of which was hypothesised to stem from
\emph{thermally} activated rupture in the linearly stable regime.

\begin{figure}
  \begin{center}
    \includegraphics[width=0.45\textwidth]{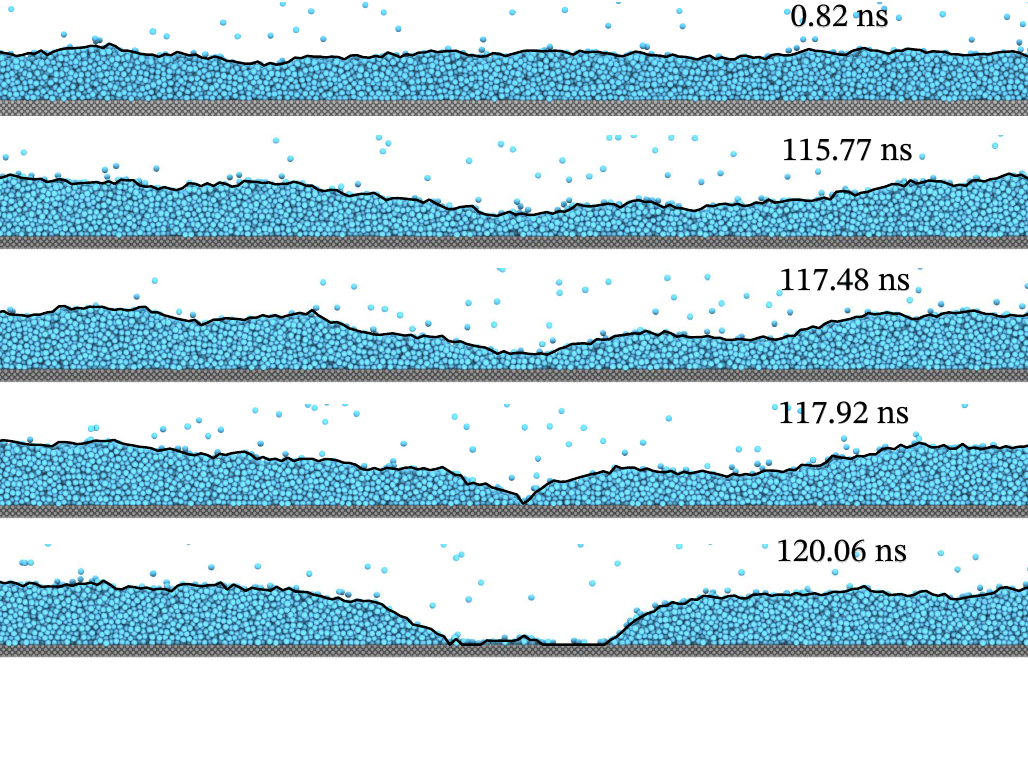}
    \includegraphics[width=0.45\textwidth]{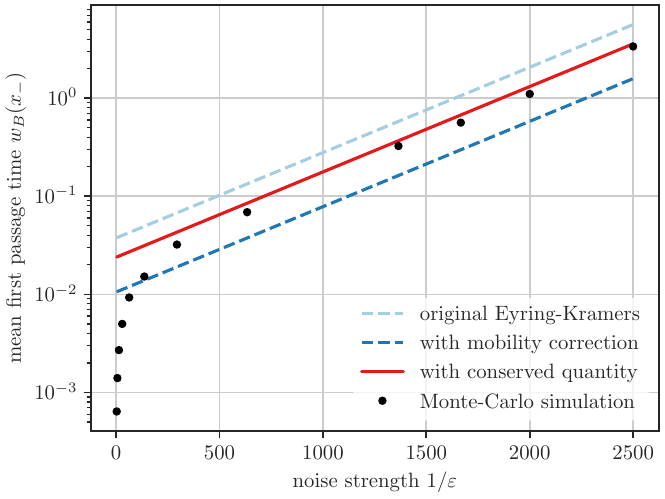}
  \end{center}
  \caption{\textbf{Left:} Snapshots from molecular dynamics simulation
    of a thin liquid film on a solid substrate. The blue particles
    indicate liquid and vapor. The silver particles indicate
    solid. The black lines show the position of the liquid-vapor
    interface, i.e. the film height. While the film is initially flat,
    is eventually ruptures by thermal fluctuations. \textbf{Right:} Average
    waiting time for rupture of thin film is plotted as a function of the inverse strength of the fluctuations. As $\varepsilon$ decreases, the observed average rupture time collapse onto the Eyring-Kramers prediction with our expression for the prefactor~(\ref{eq:eyring-kramers-final}).\label{fig:STFE}}
\end{figure}

Nanoscale films are difficult to observe experimentally and often molecular dynamics is utilised to explore their stability. However, MD can be computationally expensive, and thus there has been a drive towards developing macroscopic models to model this system. It has been
observed~\cite{gruen-mecke-rauscher:2006,duran-olivencia-gvalani-kalliadasis-etal:2019}
that the evolution of the thin film follows a stochastic hydrodynamic
limit, namely the stochastic thin film equation (STF) which, in the two-dimensional case, after
non-dimensionalisation~\cite{sprittles-liu-lockerby-etal:2023} reads
\begin{equation}
  \label{eq:stochastic-thin-film}
  \partial_t h(x,t) = \partial_x\left[c(h)\partial_x\left(-\partial^2_x h+\frac{4\pi^2}{3 h^3}\right)+\sqrt{2\varepsilon c(h)}\eta\right]\,.
\end{equation}
Here, $h(x,t)$ is the height of the thin film, $c(h)=h^3$ is the mobility associated with a no-slip solid, $\varepsilon$ is the noise amplitude and $\eta$ is a Gaussian white noise uncorrelated in both time and space, i.e. $\langle\eta(x,t)\eta(x',t')\rangle=\delta(x-x')\delta(t-t')$ where $\langle\;\rangle$ is the ensemble average and $\delta(x)$ is the Dirac delta functional. The STF is assumed to be periodic on $x\in[0,1]$ and the non-dimensionalisation is chosen so that the linear stability depends solely on the average film height $h_0=\int_0^1 h(x,t)dx=$ const: for $h_0>1$ the film is linearly stable and small perturbations without thermal fluctuations would decrease exponentially with time. For simplicity we use a constant mobility $c(h)=h_0^3$. The STF can also be interpreted as a functional gradient flow
\begin{equation*}
  \partial_t h(x,t) = -M(h)\frac{\delta E}{\delta h} + \sqrt{2\varepsilon} M_{1/2}(h)\eta\,,
\end{equation*}
for an energy functional
\begin{equation}
  \label{eq:STFE-energy}
  E[h] = \int_0^1 \left(\frac{1}{2}\left(\partial_x h\right)^2-\frac{2\pi^2}{3h^2}\right)dx\,,
\end{equation}
with mobility operator (acting on a test-function $\xi(x)$)
\begin{equation*}
  M(h)\xi = -\partial_x \left(h_0^3\partial_x\xi\right)\,,
\end{equation*}
and
\begin{equation*}
    M_{1/2}(h)\xi = \sqrt{h_0^3}\partial_x\xi\,.
\end{equation*}
The thermally activated rupture of the liquid nanofilm can then be
interpreted as a diffusive exit of the
SPDE~(\ref{eq:stochastic-thin-film}) from the basin of attraction of
the spatially constant solution
\begin{equation*}
  h(x,t) = h_0 >1\,.
\end{equation*}
Additionally, the system obeys mass
conservation, and as such constant functions are a zeromode of the
mobility operator. Therefore, in order to compute the expected time to
rupture, our full formalism~(\ref{eq:eyring-kramers-final}) is
necessary. Specifically, the computation consists of the following
steps: (1) We compute the saddle point of the energy
functional~(\ref{eq:STFE-energy}) via GAD, (2) we compute the second variation of the energy functional, acting on a test function $\xi(x)$ at an arbitrary point $h^*$, given by (see~\ref{app:hessian})
\begin{equation*}
    \frac{\delta^2 E[h(x)]}{\delta h(x)^2}\Big|_{h=h^*}\xi(x)= - \frac{4\pi^2}{h^*(x)^4}\xi(x)-\partial_{x}^2\xi(x)\,,
\end{equation*}
and compute numerically the spectrum of this operator at the fixed
point, $h^*=h_0$, and at the saddle, $h^*=h_s(x)$. The ratio of these
Hessians, evaluated according to
equation~(\ref{eq:eyring-kramers-nomobility}), yields the light blue
dashed line in figure~\ref{fig:STFE}. (3) Since the mobility operator
is not the identity, there is a correcting factor including $\mu_-$,
which we obtain numerically by computing the unique negative
eigenvalue of the operator
\begin{equation*}
  M(h_s(x)) \frac{\delta^2 E[h]}{\delta h^2}\Big|_{h=h_s}\xi = h_0^3\partial_x^2\left[\left(\frac{4\pi^2}{h_s(x)^4}+\partial_x^2\right)\xi\right]\,,
\end{equation*}
which yields instead the dark blue dashed line in figure~\ref{fig:STFE}. Lastly, we need to compute the action of the Hessian in direction of the vector normal to the conserved submanifold, which in this case is just the constant function $1(x)\equiv 1$. The inverse of the Hessian operator is evaluated numerically, and the result is the red solid line in figure~\ref{fig:STFE}, which agrees very well with the waiting time to rupture obtained via many stochastic Monte-Carlo experiments that integrate the stochastic thin film equation~(\ref{eq:stochastic-thin-film}) until a rupture is observed (black dots). The exponential time differencing method (ETD)~\cite{cox-matthews:2002} is used for the Monte-Carlo experiments with timestep $dt=1.566\cdot 10^{-7}$, and the details of implementation can be found in~\cite{sprittles-liu-lockerby-etal:2023}. Here we choose the average film height to be $h_0=1.01$, the STF is solved on a domain with $128$ uniformly distributed grid points, and the rupture times are averaged over $100$ events. Note that in \cite{sprittles-liu-lockerby-etal:2023}, additional molecular dynamics simulations demonstrated agreement of expected rupture times with equation~(\ref{eq:eyring-kramers-final}).

\begin{figure}
  \begin{center}
    \includegraphics[width=0.9\textwidth]{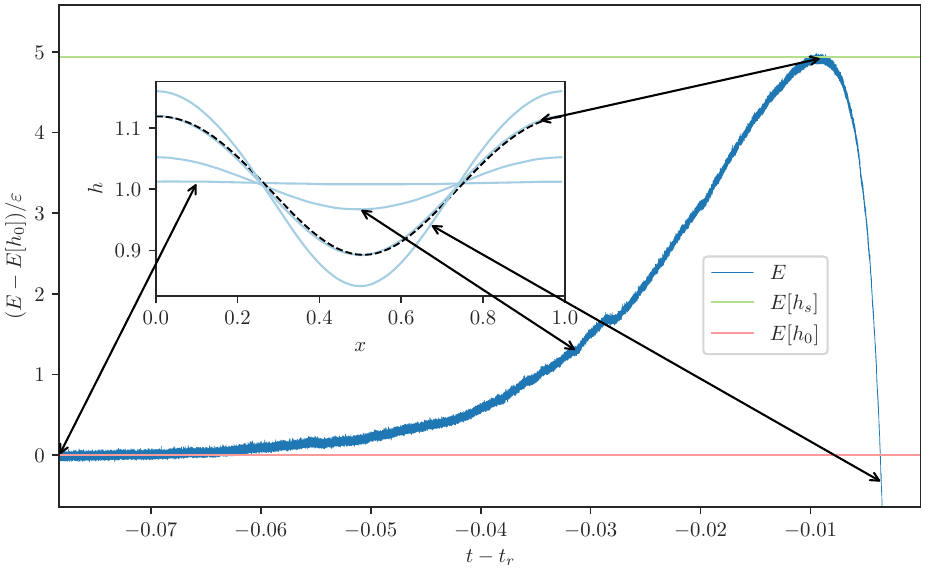}
  \end{center}
  \caption{Change of energy near rupture time. Noise amplitude is choosen to be $\varepsilon = 0.0005$. The dark blue line is the energy of the averaged profile, the light red line is the energy of the flat profile and the light green line is the energy of the saddle shape calculated analytically via GAD. The energy is translated by the energy of the flat profile and normalised by noise amplitude $\varepsilon$. The inset shows the averaged profile at different times with light blue lines and the analytical saddle shape with black dashed lines.\label{fig:STFE_energy}}
\end{figure}

To further characterise the rupture process, we investigate how the energy~(\ref{eq:STFE-energy}) changes with time near rupture. We perform $200$ independent simulations and record the film profiles for $5\cdot 10^5$ timesteps before the rupture time $t_r$. The energy is then calculated with the averaged profile to filter out the effect of thermal fluctuations, as shown in figure~\ref{fig:STFE_energy}. The light blue lines and the black dashed line in the inset show the averaged profiles at different times and the analytical saddle shape calculated from GAD, respectively. It is shown that the energy increase as the averaged profile deviates from its flat steady state, until the averaged profile reaches the saddle shape $h_s$ and drops dramatically. The analytical energy barrier is recovered from the simulations, and the analytical saddle shape agrees well with the averaged profile with maximum energy. These findings indicate that our saddle shape calculation is correct and the transition (or rupture) indeed goes through the saddle.

\section{Social dynamics and urban segregation}
\label{sec:soci-dynam-urban}

Fluctuating hydrodynamics SPDEs are not only encountered in continuum
limits of actual fluid models, but are regularly derived whenever
there is a large number of interacting agents, such as interacting
active particles~\cite{manacorda-puglisi:2017}, in traffic
flow~\cite{chu-yang-saigal-etal:2011}, pedestrian
dynamics~\cite{carrillo-martin-wolfram:2016,aurell-djehiche:2019} and
socioeconomic
interactions~\cite{zakine-garnier-brun-becharat-etal:2024}. In each
case, the particles are replaced by agents capable of acting according
to some simple ruleset. In socioeconomic models, a generic assumption
is that the agents try to individualistically improve their own
outcome or utility. In this context, the number of possible
equilibrium states of the overall model, as well as their relative
likelihood, becomes extremely important, as it describes directly the
most likely emergent state that the system will spontaneously converge
to. Consequently, the convergence to the ultimate stable state can be
seen as the manifestation of the ``invisible hand'' crystallizing the
collective societal state out of the individual agents' behavior.

A well-known example is the phenomenon of urban segregation, described
by the Schelling or Sakoda-Schelling models~\cite{schelling:1971,
  sakoda:1971}. In these, a large number of agents is prescribed, each
belonging to one of multiple distinct sub-populations, for example
representing social or ethnic background, which are free to relocate
depending on their preferences. In the original model, the presence of
only a slight preference of agents to surround themselves with
neighbors of their own sub-population led to completely segregated
geographical regions in the long-time limit. Following ideas
introduced in~\cite{grauwin-bertin-lemoy-etal:2009,
  burger-pietschmann-ranetbauer-etal:2022,
  zakine-garnier-brun-becharat-etal:2024}, these models can be
simplified to consist only of a single population, with spatial
exclusion and density dependent diffusivity, leading to a fluctuating
hydrodynamic equation of
\begin{equation}
  \label{eq:segregation}
  \partial_t \rho = \nabla\cdot((1-\rho)\nabla(D(\rho)\rho) + \rho D(\rho)\nabla \rho) + \nabla\cdot(\sqrt{\rho(1-\rho)} \eta(x,t))\,,
\end{equation}
where $\eta$ is spatio-temporally white noise and the diffusivity of
agents is given by
\begin{equation}
  \label{eq:diffusivity}
  D(\rho) = D_0 e^{-CK\star\rho}\,.
\end{equation}
This diffusivity exhibits a spatial convolution ``$\star$'' with a
kernel $K$, representing non-local sensing of their neighborhood by
each of the agents. In essence, equation~(\ref{eq:segregation})
describes the (nonlinear) diffusion of agents under spatial exclusion,
such that the density remains between 0 and 1, representing complete
absence of agents to full occupation. The density dependent
diffusivity~(\ref{eq:diffusivity}) represents the tendency of agents
to relocate towards a higher density of peers in the vicinity, up to
some maximum range given by a spatial cutoff of the (symmetric) kernel
$K(x,y)=K(x-y)$. This corresponds to energy and mobility given by
\begin{equation}
\label{eq:SEG_energy}
  \begin{cases}
    E[\rho] = \int\left(\rho\log \rho + (1-\rho)\log(1-\rho) - \tfrac12 C\rho K\star \rho\right)\,dx\\
    M(\rho) \xi = -\nabla\cdot(\rho(1-\rho) D(\rho) \nabla\xi)\,,\\
    M_{1/2}(\rho)\xi = \nabla\cdot(\sqrt{\rho(1-\rho)D(\rho)}\xi)\,,
  \end{cases}
\end{equation}
where the mobility again conserves total mass, and we are in
the framework where our results apply.

For simplicity, we assume the population density $\rho$ is periodic on domain $x\in[0,1]$. We also assume a Gaussian kernel $K(z) = \frac
1{\sqrt{2\pi}\kappa} \exp(-\frac12 z^2/\kappa^2)$ with sensing length
scale $\kappa>0$, corresponding to $\hat K(k) = \exp(-k^2\kappa^2/2)$
in Fourier space. Expanding the convolution for $\kappa\ll 1$ yields (see~\ref{app:expan_conv})
\begin{equation*}
\label{eq:SEG_conv}
  (K\star \rho)(x) = \rho(x) + \frac{\kappa^2}{2} \partial_x^2 \rho(x) + \mathcal O(\kappa^4)\,.
\end{equation*}

\begin{figure}
  \begin{center}
    \includegraphics[width=0.45\textwidth]{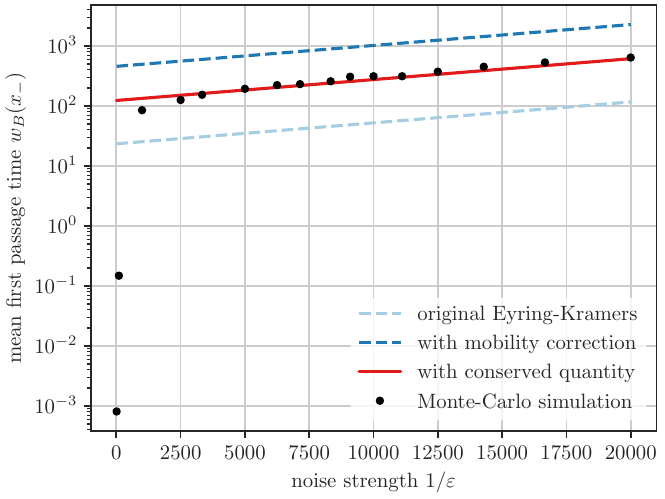}
    \includegraphics[width=0.45\textwidth]{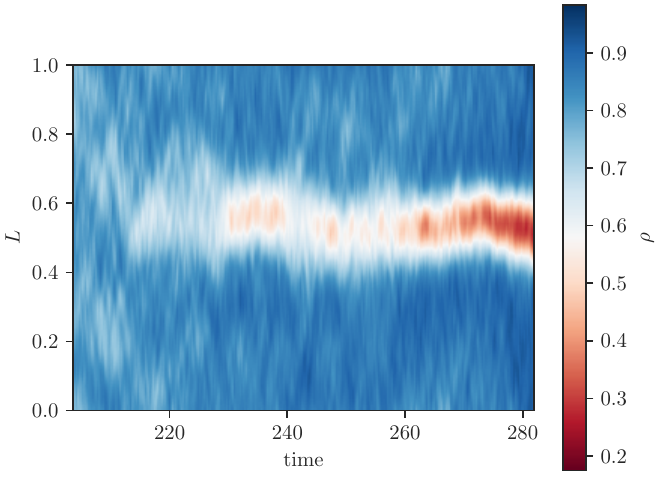}
  \end{center}
  \caption{\textbf{Left:} Average waiting time for the spontaneous
    segregation of the social dynamics model as a function of the
    inverse strength of fluctuations. After a transient for large
    noise, waiting times are exponentially distributed. The correct
    exponential distribution is correctly predicted, including its
    prefactor, by our formula~(\ref{eq:eyring-kramers-final}), while
    naive application of the Eyring-Kramers formula leads to
    mispredictions of a factor 5.\label{fig:SEG} \textbf{Right:}
    Evolution (time-space) of the density of agents near an observed
    segregation event. The originally homogeneous and fluctuating
    distribution of agents spontaneously segregates into a dense and
    a diluted region.}
\end{figure}
The Hessian operating on a periodic test function $\xi(x)$ can then be expressed by (see~\ref{app:hessian})
\begin{equation*}
    \frac{\delta^2E[\rho(x)]}{\delta\rho(x)^2}\xi(x) = \left(\frac{1}{\rho(x)}+\frac{1}{1-\rho(x)}-C\right)\xi(x)-\frac{\kappa^2}{2}C\partial^2_x \xi(x)\,.
\end{equation*}
For simplicity, we further assume that the population density is constant in the mobility operator, $\rho(x) = \bar{\rho}$, where $\bar\rho=\int_0^1\rho dx$ is the mass that is conserved. We can then calculate $\mu_-$ by numerically computing the unique negative eigenvalue of the operator
\begin{equation*}
    M(\bar\rho)\frac{\delta^2 E[\rho]}{\delta \rho^2}\xi = -\bar\rho(1-\bar\rho)D(\bar\rho)\partial^2_x\left[\left(\frac{1}{\rho}+\frac{1}{1-\rho}-C-\frac{\kappa^2}{2}C\partial^2_x\right)\xi\right]\,.
\end{equation*}

For the parameter $C=6, \kappa = 8\cdot10^{-3}$, $D_0 = 1$ and mass
$\bar\rho=\int \rho\,dx = 0.7908$, this system exhibits multiple stable
fixed points: A spatially homogeneous solution $\rho(x)=\bar \rho$, as
well as a localised (``cluster'' or ``aggregated'') state with minimum $\rho_{min} = 0.1747$, which
observes an aggregation of the agents in part of the domain, leaving
behind a depleted region where the concentration of agents is low. Both
of these states are locally stable and hence long lived for small
enough fluctuations. Fluctuations are understood to be spontaneous
local movements of agents, preserving their total number, but
rearranging them locally in space, with probabilities based both on
the local densities (through the exclusion terms proportional to
$\rho(1-\rho)$, which prevents movement out of empty or into fully
occupied regions), as well as their perceived relative attractiveness
encoded in the energy functional. As a result, an initially
homogeneous population will eventually spontaneously segregate due to
fluctuations. Population density $\rho$ is discretised with $64$ uniformly distributed grid points. Equation~(\ref{eq:segregation}) is solved numerically using the same exponential time difference scheme as in the previous section with timestep $dt=7.832\cdot 10^{-5}$. The population density is segregated when its minimum has reached $\rho_{min}$, and we record the waiting times averaged over $100$ realisations for different noise amplitude $\varepsilon$. These waiting times have an exponential distribution, correctly predicted by our formula~(\ref{eq:eyring-kramers-final}), as
shown in figure~\ref{fig:SEG} (left, red line), while using the
original Eyring-Kramers formula, or the mobility correction only leads
to mispredictions by a factor approximately 5.

Figure~\ref{fig:SEG} (right) shows a single segregation event, in
which an initially homogeneous population of agents is driven to
segregation by fluctuations that locally deplete the population strong
enough for a gap to form, transitioning into the segregated state with
a dilute region and an aggregate.

\section{Conclusion}
\label{sec:conclusion}

We show how recent breakthroughs in the derivation of mean first
passage times to leave the basin of attraction of metastable states
can be generalised to compute expected waiting times for wide classes
of (generalised) gradient flows in the presence of conserved
quantities. These generalizations are particularly important when
applied to fluctuating hydrodynamics equations, which are limiting
equations of interacting particle systems in the limit of many
particles.

Such equations are ubiquitous in nature, whenever a large number of
interacting agents leads to complex emergent behavior: Apart from
molecular dynamics and its applications in chemistry and material
design, systems such as traffic flows, pedestrian dynamics, or
socioeconomics must follow similar large-scale limits. All these
systems usually possess conserved quantities, such as mass or
number-of-agents, energy, or momentum, which lead to divergences in
the naive application of the limiting equations for mean first passage
times due to zero-eigenvectors in the corresponding mobility operator.

Here, we show how we can generalise existing results to incorporate
(1) position dependent mobility, (2) degenerate mobility operators
including zero-modes, (3) formally the functional setting, where we
are applying our results successfully to gradient flows in function
spaces. The result is a closed formula for the expected passage times
for leaving a locally stable basin of attraction of the stochastic
dynamics in the low-noise limit. We demonstrate our results to be
applicable in a broad class of settings, including liquid nanofilm
rupture times as well as social dynamics with urban segregation. The
results are very generally applicable to other systems of the same
class, including shallow water flows, Elo dynamics, or traffic flows.

\section*{Acknowledgments}

TG would like to thank A.~Donev for interesting discussions. TG
acknowledges the support received from the EPSRC projects EP/T011866/1
and EP/V013319/1. JES acknowledges the support from EPSRC grants
EP/W031426/1, EP/S029966/1 and EP/P031684/1. JBL was supported by a
studentship within the EPSRC–supported Centre for Doctoral Training in
modeling of Heterogeneous Systems, Grant No. EP/S022848/1. We
additionally want to thank the anonymous referee for suggesting the
generalization to multiple conserved quantities via restriction to the
tangent space. Relevant code used in this paper is openly available
at:
\url{https://github.com/JingBang-Liu/fluctuating_hydrodynamics_first_passage_time}.

\section*{References}

\bibliographystyle{iopart-num}
\bibliography{bib}

\providecommand{\newblock}{}
\begin{thebibliography}{10}
\expandafter\ifx\csname url\endcsname\relax
  \def\url#1{{\tt #1}}\fi
\expandafter\ifx\csname urlprefix\endcsname\relax\def\urlprefix{URL }\fi
\providecommand{\eprint}[2][]{\url{#2}}

\bibitem{brinkman-yan-maffei-etal:2022}
Brinkman B~A~W, Yan H, Maffei A, Park I~M, Fontanini A, Wang J and La~Camera G 2022 {\em Applied Physics Reviews\/} {\bf 9} 011313 ISSN 1931-9401

\bibitem{garcia-hofmann:2024}
Garcia E~R and Hofmann J 2024 {\em Physical Review E\/} {\bf 109} L032103

\bibitem{ashwin-heydt:2020}
Ashwin P and von~der Heydt A~S 2020 {\em Journal of Statistical Physics\/} {\bf 179} 1531--1552 ISSN 1572-9613

\bibitem{lohmann-dijkstra-jochum-etal:2024}
Lohmann J, Dijkstra H~A, Jochum M, Lucarini V and Ditlevsen P~D 2024 {\em Science Advances\/} {\bf 10} eadi4253 ISSN 2375-2548

\bibitem{bashkirtseva-ryashko:2011}
Bashkirtseva I and Ryashko L 2011 {\em Chaos: An Interdisciplinary Journal of Nonlinear Science\/} {\bf 21} 047514 ISSN 1054-1500

\bibitem{bovier-eckhoff-gayrard-etal:2004}
Bovier A, Eckhoff M, Gayrard V and Klein M 2004 {\em Journal of the European Mathematical Society\/} {\bf 6} 399--424 ISSN 1435-9855

\bibitem{arrhenius:1889}
Arrhenius S 1889 {\em Zeitschrift fuer physikalische Chemie\/} {\bf 4} 226--248

\bibitem{freidlin-wentzell:2012}
Freidlin M~I and Wentzell A~D 2012 {\em Random perturbations of dynamical systems\/} vol 260 (Springer)

\bibitem{eyring:1935}
Eyring H 1935 {\em The Journal of Chemical Physics\/} {\bf 3} 107--115 ISSN 0021-9606

\bibitem{kramers:1940}
Kramers H~A 1940 {\em Physica\/} {\bf 7} 284--304 ISSN 0031-8914

\bibitem{berglund:2013}
Berglund N 2013 {\em Markov Processes and Related Fields\/} {\bf 19} 459--490

\bibitem{bouchet-reygner:2016}
Bouchet F and Reygner J 2016 {\em Annales Henri Poincar\'e\/} {\bf 17} 3499--3532 ISSN 1424-0637, 1424-0661

\bibitem{landim-seo:2018}
Landim C and Seo I 2018 {\em Communications on Pure and Applied Mathematics\/} {\bf 71} 203--266 ISSN 1097-0312

\bibitem{landim-mariani-seo:2019}
Landim C, Mariani M and Seo I 2019 {\em Archive for Rational Mechanics and Analysis\/} {\bf 231} 887--938 ISSN 0003-9527, 1432-0673

\bibitem{jordan-kinderlehrer-otto:2006}
Jordan R, Kinderlehrer D and Otto F 2006 {\em SIAM Journal on Mathematical Analysis\/}

\bibitem{berglund-gentz:2010}
Berglund N and Gentz B 2010 {\em {Markov Processes And Related Fields}\/} {\bf 16} 549--598

\bibitem{gardiner:2009}
Gardiner C 2009 {\em Stochastic Methods: A Handbook for the Natural and Social Sciences\/} (Springer)

\bibitem{bedeaux-mazur:1974}
Bedeaux D and Mazur P 1974 {\em Physica\/} {\bf 76} 247--258 ISSN 0031-8914

\bibitem{dean:1996}
Dean D~S 1996 {\em Journal of Physics A: Mathematical and General\/} {\bf 29} L613 ISSN 0305-4470

\bibitem{landau-lifshitz:2007}
Landau L~D and Lifshitz E~M 2007 {\em Lehrbuch der theoretischen Physik VI - Hydrodynamik\/} 5th ed (Verlag Harri Deutsch)

\bibitem{fehrman-gess:2023}
Fehrman B and Gess B 2023 {\em Inventiones mathematicae\/} {\bf 234} 573--636 ISSN 1432-1297

\bibitem{djurdjevac-kremp-perkowski:2024}
Djurdjevac A, Kremp H and Perkowski N 2024 {\em Stochastics and Partial Differential Equations: Analysis and Computations\/} ISSN 2194-041X

\bibitem{e-zhou:2011}
E W and Zhou X 2011 {\em Nonlinearity\/} {\bf 24} 1831 ISSN 0951-7715

\bibitem{kasdin:1995}
Kasdin N~J 1995 {\em Journal of Guidance, Control, and Dynamics\/} {\bf 18} 114--120

\bibitem{weinstein-ruschak:2004}
Weinstein S~J and Ruschak K~J 2004 {\em Annual Review of Fluid Mechanics\/} {\bf 36} 29--53

\bibitem{karnik-fan-yue-etal:2005}
Karnik R, Fan R, Yue M, Li D, Yang P and Majumdar A 2005 {\em Nano Letters\/} {\bf 5} 943--948 ISSN 1530-6984

\bibitem{makarov-milichko-mukhin-etal:2016}
Makarov S~V, Milichko V~A, Mukhin I~S, Shishkin I~I, Zuev D~A, Mozharov A~M, Krasnok A~E and Belov P~A 2016 {\em Laser \& Photonics Reviews\/} {\bf 10} 91--99 ISSN 1863-8880

\bibitem{herminghaus-jacobs-mecke-etal:1998}
Herminghaus S, Jacobs K, Mecke K, Bischof J, Fery A, {Ibn-Elhaj} M and Schlagowski S 1998 {\em Science\/} {\bf 282} 916--919

\bibitem{xie-karim-douglas-etal:1998}
Xie R, Karim A, Douglas J~F, Han C~C and Weiss R~A 1998 {\em Physical Review Letters\/} {\bf 81} 1251--1254 ISSN 0031-9007, 1079-7114

\bibitem{seemann-herminghaus-jacobs:2001}
Seemann R, Herminghaus S and Jacobs K 2001 {\em Physical Review Letters\/} {\bf 86} 5534--5537 ISSN 0031-9007, 1079-7114

\bibitem{nguyen-fuentes-cabrera-fowlkes-etal:2014}
Nguyen T~D, {Fuentes-Cabrera} M, Fowlkes J~D and Rack P~D 2014 {\em Physical Review E\/} {\bf 89} 032403 ISSN 1539-3755, 1550-2376

\bibitem{zhang-sprittles-lockerby:2019}
Zhang Y, Sprittles J~E and Lockerby D~A 2019 {\em Physical Review E\/} {\bf 100} 023108

\bibitem{sprittles-liu-lockerby-etal:2023}
Sprittles J~E, Liu J, Lockerby D~A and Grafke T 2023 {\em Physical Review Fluids\/} {\bf 8} L092001 ISSN 2469-990X

\bibitem{ruckenstein-jain:1974}
Ruckenstein E and Jain R~K 1974 {\em Journal of the Chemical Society, Faraday Transactions 2\/} {\bf 70} 132 ISSN 0300-9238

\bibitem{gruen-mecke-rauscher:2006}
Gr\"un G, Mecke K and Rauscher M 2006 {\em Journal of Statistical Physics\/} {\bf 122} 1261--1291 ISSN 1572-9613

\bibitem{duran-olivencia-gvalani-kalliadasis-etal:2019}
{Dur{\'a}n-Olivencia} M~A, Gvalani R~S, Kalliadasis S and Pavliotis G~A 2019 {\em Journal of Statistical Physics\/} {\bf 174} 579--604 ISSN 1572-9613

\bibitem{cox-matthews:2002}
Cox S and Matthews P 2002 {\em Journal of Computational Physics\/} {\bf 176} 430--455 ISSN 0021-9991

\bibitem{manacorda-puglisi:2017}
Manacorda A and Puglisi A 2017 {\em Phys. Rev. Lett.\/} {\bf 119}(20) 208003

\bibitem{chu-yang-saigal-etal:2011}
Chu K~C, Yang L, Saigal R and Saitou K 2011 Validation of stochastic traffic flow model with microscopic traffic simulation {\em 2011 IEEE International Conference on Automation Science and Engineering\/} pp 672--677

\bibitem{carrillo-martin-wolfram:2016}
Carrillo J~A, Martin S and Wolfram M~T 2016 {\em Mathematical Models and Methods in Applied Sciences\/} {\bf 26} 671--697

\bibitem{aurell-djehiche:2019}
Aurell A and Djehiche B 2019 {\em Transportation Research Part B: Methodological\/} {\bf 121} 168--183 ISSN 0191-2615

\bibitem{zakine-garnier-brun-becharat-etal:2024}
Zakine R, Garnier-Brun J, Becharat A~C and Benzaquen M 2024 {\em Phys. Rev. E\/} {\bf 109}(4) 044310

\bibitem{schelling:1971}
Schelling T~C 1971 {\em The Journal of Mathematical Sociology\/} {\bf 1} 143--186

\bibitem{sakoda:1971}
Sakoda J~M 1971 {\em The Journal of Mathematical Sociology\/} {\bf 1} 119--132

\bibitem{grauwin-bertin-lemoy-etal:2009}
Grauwin S, Bertin E, Lemoy R and Jensen P 2009 {\em Proceedings of the National Academy of Sciences\/} {\bf 106} 20622--20626

\bibitem{burger-pietschmann-ranetbauer-etal:2022}
Burger M, Pietschmann J~F, Ranetbauer H, Schmeiser C and Wolfram M~T 2022 {\em European Journal of Applied Mathematics\/} {\bf 33} 111--132 ISSN 0956-7925, 1469-4425

\bibitem{parr-yang:1995}
Parr R~G and Weitao Y 1995 {\em {Density-Functional Theory of Atoms and Molecules}\/} (Oxford University Press)

\end{thebibliography}

\appendix

\section{}

\begin{lemma}\label{lm:nEV}
  Let $s$ be the relevant saddle, and $\mu_-$ the unique unstable
  eigenvector of $M_s H_s$, and $\hat n$ the normal vector to
  $\partial B$ at the saddle. Then
  \begin{equation}
    \label{eq:mulem-st}
    H_s M_s\hat n = \mu_-\hat n\,,
  \end{equation}
  i.e. $\hat n$ is an eigenvector of $H_s M_s$ with eigenvalue $\mu_-$.
\end{lemma}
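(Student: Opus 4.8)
The claim connects the unstable eigenvalue $\mu_-$ of $M_s H_s$ to the normal vector $\hat n$ of the separatrix $\partial B$ at the saddle, asserting $H_s M_s \hat n = \mu_- \hat n$. The plan is to exploit the following structural facts: (i) $M_s H_s$ and $H_s M_s$ have the same spectrum (they are similar via $M_s$ when $M_s$ is invertible, and one can argue by a limiting/perturbation argument if $M_s$ is merely positive semi-definite — but in this lemma we are in the non-degenerate mobility setting, so $M_s$ is invertible); (ii) $M_s H_s$ is self-adjoint with respect to the inner product $\langle x, y\rangle_{M_s^{-1}} := x \cdot M_s^{-1} y$, hence diagonalisable with real eigenvalues, exactly one of which is negative ($\mu_-$) by the saddle assumption; and (iii) the separatrix $\partial B$, being the stable manifold of the saddle for the deterministic flow $\dot x = -M(x)\nabla U(x)$, has tangent space at $x_s$ equal to the span of the stable eigenvectors of the linearised drift $-M_s H_s$.

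First I would linearise the deterministic dynamics near $x_s$: writing $b(x) = -M(x)\nabla U(x)$ (dropping the subdominant $\mathcal O(\eps)$ term), we have $\nabla b(x_s) = -M_s H_s$ since $\nabla U(x_s) = 0$. The stable manifold theorem then identifies $T_{x_s}\partial B$ with the sum of eigenspaces of $-M_s H_s$ associated with eigenvalues of negative real part, i.e.\ with the eigenspaces of $M_s H_s$ for the $n-1$ positive eigenvalues. Consequently $\hat n$, being normal to $T_{x_s}\partial B$ in the appropriate sense, should be identified with the remaining (unstable) eigendirection. The cleanest way to pin down ``the appropriate sense'' is to use the $M_s^{-1}$-self-adjointness of $M_s H_s$: its eigenvectors are orthogonal in the $M_s^{-1}$ inner product, so the eigenvector $v_-$ with eigenvalue $\mu_-$ is $M_s^{-1}$-orthogonal to all of $T_{x_s}\partial B$, meaning $M_s^{-1} v_- \perp T_{x_s}\partial B$ in the Euclidean sense, i.e.\ $M_s^{-1} v_- \parallel \hat n$, equivalently $v_- \parallel M_s \hat n$. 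Feeding $v_- = M_s\hat n$ into $M_s H_s v_- = \mu_- v_-$ and applying $M_s^{-1}$ from the left gives exactly $H_s M_s \hat n = \mu_- \hat n$.

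The main obstacle, and the step requiring the most care, is justifying that $\hat n$ — defined geometrically as the Euclidean outward normal to the basin boundary — really does coincide with the $M_s^{-1}$-image of the unstable eigenvector, rather than with the unstable eigenvector itself or some other nearby direction. This hinges on correctly pairing up the three natural ``normal'' notions (Euclidean normal to $\partial B$, $M_s^{-1}$-orthogonal complement of $T_{x_s}\partial B$, and the unstable eigenspace of the linearisation) via the self-adjointness structure; a sign/placement error in the metric here would swap $M_s\hat n$ for $\hat n$ and break the identity. A secondary technical point is ensuring the linearisation genuinely controls the tangent space of $\partial B$ at the saddle — this is standard invariant-manifold theory given the hyperbolicity assumption (one negative, $n-1$ positive eigenvalues of $M_s H_s$, none on the imaginary axis), but it should be invoked explicitly. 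Once these identifications are in place, the algebraic conclusion is immediate, and in particular it also yields the companion fact $H_s M_s \hat n = \mu^+ \hat n$ used later only if one instead tracks the stable directions, so care with which eigenvalue is being extracted closes the argument.
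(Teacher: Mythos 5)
Your proof is correct, and it shares the paper's geometric pivot: the separatrix is the stable manifold of the deterministic flow, so $T_{x_s}\partial B$ is spanned by the eigenvectors $v_i^+$ of $M_sH_s$ with positive eigenvalues, all Euclidean-orthogonal to $\hat n$. The conversion of this fact into the eigenvector identity is, however, done differently. The paper expands an arbitrary vector $v$ in the (implicitly assumed complete) eigenbasis of $M_sH_s$, computes $\hat n\cdot M_sH_s v=\mu_-\,\hat n\cdot v$ for all $v$, and then transposes using the symmetry of $M_s$ and $H_s$ to read off $H_sM_s\hat n=\mu_-\hat n$; no metric is ever named. You instead make the underlying structure explicit: $M_sH_s$ is self-adjoint in the $M_s^{-1}$ inner product, so its eigenvectors are $M_s^{-1}$-orthogonal, whence $(M_s^{-1}v_-)\cdot v_i^+=0$ for all $i$, forcing $M_s^{-1}v_-\parallel\hat n$, i.e.\ $v_-\parallel M_s\hat n$, and applying $M_s^{-1}$ to $M_sH_s(M_s\hat n)=\mu_-M_s\hat n$ gives the claim. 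Your route buys an explicit justification of two things the paper leaves tacit, namely the diagonalisability of $M_sH_s$ and the reason $\hat n$ is paired with $M_s\hat n$ rather than with the unstable eigenvector itself; the cost is invoking $M_s^{-1}$, which is legitimate in this lemma (and, in the conserved-quantity application, after restriction to the tangent space of the conserved manifold, where the mobility is again positive definite), whereas the paper's transpose computation never needs the inverse explicitly. One small blemish: your closing remark that the argument would ``also yield $H_sM_s\hat n=\mu^+\hat n$ if one tracks the stable directions'' is spurious, since $\hat n$ is an eigenvector of $H_sM_s$ for the single eigenvalue $\mu_-$ only (the stable eigenvalues $\mu_i^+$ belong to the eigenvectors $M_s^{-1}v_i^+$); this aside does not affect the validity of your proof.
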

\begin{proof}
  Let $V^+=T_{x_s}\partial B$ be the tangent space to the separatrix
  at the saddle, which is spanned by the $n-1$ eigenvectors
  $\{v_i^+\}_{i\in\{1,\ldots,n-1\}}$ that correspond to positive 
  eigenvalues $\mu_i^+$ of $M_sH_s$. All these stable eigenvectors are
  parallel to the separatrix, implying $v_i^+\cdot \hat n=0$ for all
  $i$. Further, denote by $v^-$ the unique unstable eigenvector of
  $M_s H_s$ with eigenvalue $\mu_-$. Together, the $v_i^+$ and $v^-$
  span all of $\RR^n$ and we can write every vector $v\in\RR^n$ as
  \begin{equation}
    v = c^- v^- + \sum_i c_i^+ v_i^+\quad\text{with}\quad c^-,c_1^+,\ldots,c_{n-1}^+\in \RR\,.
  \end{equation}
  Then
  \begin{align*}
    \hat n\cdot M_sH_s v &= c^- \hat n\cdot M_sH_s v^- + \sum_i c_i^+ \hat n\cdot M_sH_s v_i^+\\
    &= \mu_- c^- \hat n\cdot v^- + \sum_i c_i^+ \mu_i^+ \underbrace{\hat n\cdot v_i^+}_{=0}\\
    &= \mu_- \left(c^- v^-\cdot \hat n + \sum_i c_i^+ \underbrace{v_i^+\cdot \hat n}_{=0}\right)
    &= \mu_- v\cdot \hat n\,.
  \end{align*}
  We conclude
  \begin{equation}
    \label{eq:mulem-1}
    v\cdot H_s M_s \hat n = \mu_- v\cdot \hat n\,.
  \end{equation}
  Since~(\ref{eq:mulem-1}) holds for arbitrary $v$, we obtain the
  statement~(\ref{eq:mulem-st}).
\end{proof}
\begin{lemma}\label{lm:beta}
  At the saddle, $x=x_s$,
  \begin{equation}
    \beta(x_s) = \hat n\cdot M_s H_s\hat n = \mu_-\,.
  \end{equation}
\end{lemma}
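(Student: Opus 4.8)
The plan is to derive $\beta(x_s)=\hat n\cdot M_s H_s\hat n=\mu_-$ directly from Lemma~\ref{lm:nEV}, which has just been established. Recall that $\beta(x_s)$ was defined via the expansion of the normal drift in the boundary layer, and at the saddle it equals $\hat n\cdot M_s H_s\hat n$ (using $b(x)=-M(x)\nabla U(x)$ to leading order and $\nabla b(x_s)=-M_s H_s$, up to the sign bookkeeping already carried out in section~\ref{sec:laplace-asymptotics}). Lemma~\ref{lm:nEV} tells us that $H_s M_s\hat n=\mu_-\hat n$. The key observation is that $\hat n\cdot M_s H_s\hat n$ is the transpose pairing of this: since $M_s$ and $H_s$ are symmetric, $M_s H_s=(H_s M_s)^T$, hence $\hat n\cdot M_s H_s\hat n=(H_s M_s\hat n)\cdot\hat n$.

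The steps I would carry out are: first, write $\beta(x_s)=\hat n\cdot M_s H_s\hat n$ as established in section~\ref{sec:laplace-asymptotics}; second, use symmetry of $M_s$ and $H_s$ to rewrite $\hat n\cdot M_s H_s\hat n = \hat n\cdot (H_s M_s)^T\hat n = (H_s M_s\hat n)\cdot\hat n$; third, apply Lemma~\ref{lm:nEV} to substitute $H_s M_s\hat n=\mu_-\hat n$, yielding $(H_s M_s\hat n)\cdot\hat n = \mu_-\,\hat n\cdot\hat n = \mu_-$, where the last equality uses that $\hat n$ is a unit vector. The proof is essentially a two-line computation once Lemma~\ref{lm:nEV} is in hand.

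There is no real obstacle here; the entire content has been front-loaded into Lemma~\ref{lm:nEV}, whose proof exploited the decomposition of $\RR^n$ into the stable eigenspace $T_{x_s}\partial B$ (orthogonal to $\hat n$) and the one-dimensional unstable direction. The only things to be careful about are: that $\hat n$ is normalised so $|\hat n|^2=1$; that $M_s$ and $H_s$ are indeed symmetric (the mobility is symmetric positive semi-definite and $H_s$ is a Hessian, hence symmetric); and that the sign conventions in the definition of $\beta$ match those in Lemma~\ref{lm:nEV}, so that the unique negative eigenvalue $\mu_-$ of $M_s H_s$ appears with the correct sign. None of these requires any genuine work beyond citing what is already in place.
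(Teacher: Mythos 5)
Your proof is correct and matches the paper's approach: the paper's own proof of this lemma is simply ``follows immediately from Lemma~\ref{lm:nEV}'', and your two-line computation (using symmetry of $M_s$ and $H_s$ to write $\hat n\cdot M_s H_s\hat n=(H_s M_s\hat n)\cdot\hat n=\mu_-|\hat n|^2=\mu_-$) is exactly the step being left implicit there.
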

\begin{proof}
  Follows immediately from lemma~\ref{lm:nEV}.
\end{proof}

\begin{lemma}\label{lm:alpha}
  At the saddle, $x=x_s$,
  \begin{equation}
    \mu_- = \frac{\alpha(s)}{\hat n\cdot H_s^{-1}\hat n}\,.
  \end{equation}
\end{lemma}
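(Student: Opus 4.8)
The plan is to reduce this identity to Lemma~\ref{lm:nEV} by a one-line manipulation. Lemma~\ref{lm:nEV} gives $H_s M_s \hat n = \mu_- \hat n$. Since $x_s$ is a genuine saddle of the restricted dynamics, $H_s$ has no zero eigenvalue (one negative, $n-1$ positive), so it is invertible, and we may multiply the identity on the left by $H_s^{-1}$ to obtain $M_s\hat n = \mu_- H_s^{-1}\hat n$.

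Next I would pair both sides with $\hat n$ in the Euclidean inner product. On the left this yields $\hat n \cdot M_s\hat n = \alpha(x_s)$ by the definition of $\alpha$ introduced in the boundary layer analysis; on the right it yields $\mu_-\,(\hat n\cdot H_s^{-1}\hat n)$. Hence $\alpha(x_s) = \mu_-\,(\hat n\cdot H_s^{-1}\hat n)$, and dividing by $\hat n\cdot H_s^{-1}\hat n$ gives the claimed formula $\mu_- = \alpha(x_s)/(\hat n\cdot H_s^{-1}\hat n)$.

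The only point that needs a remark is that the denominator $\hat n\cdot H_s^{-1}\hat n$ is nonzero, so that the division is legitimate; but this is automatic, since $M_s$ is positive definite (in the non-degenerate case treated here), whence $\alpha(x_s) = \hat n\cdot M_s\hat n > 0$, while $\mu_-<0$, forcing $\hat n\cdot H_s^{-1}\hat n = \alpha(x_s)/\mu_- < 0 \neq 0$. There is essentially no obstacle here: once Lemma~\ref{lm:nEV} is in hand, this lemma is purely algebraic, and the only care required is tracking that $\hat n$ is genuinely not in the kernel-type degenerate configuration, which is guaranteed by the saddle assumption and (in the conserved case) by $\hat n \nparallel \hat m$.
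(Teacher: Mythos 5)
Your proof is correct and follows essentially the same route as the paper: apply lemma~\ref{lm:nEV} to get $M_s\hat n = \mu_- H_s^{-1}\hat n$, take the inner product with $\hat n$, and solve for $\mu_-$. Your additional check that $\hat n\cdot H_s^{-1}\hat n\neq 0$ (via $\alpha(x_s)>0$) is a harmless refinement the paper leaves implicit.
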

\begin{proof}
  For $\alpha(s) = \hat n\cdot M_s\hat n$, we have
  \begin{equation}
    M_s\hat n = \mu_- H_s^{-1} \hat n
  \end{equation}
  from lemma~\ref{lm:nEV}. Solving for $\mu_-$ yields
  \begin{equation}
    \mu_- = \frac{\hat n\cdot M_s\hat n}{\hat n\cdot H_s^{-1}\hat n} = \frac{\alpha(s)}{\hat n\cdot H_s^{-1}\hat n}\,,
  \end{equation}
  which is the desired result.
\end{proof}

\begin{lemma}\label{lm:detp}
  Let $N$ be a co-dimension 1 hyperplane in $\RR^n$ with normal vector
  $\hat n$, and $H\in\RR^{n\times n}$ positive definite. Then, the
  Gaussian integral, restricted to the hyperplane $N$, is given by
  \begin{equation}
    \int_N e^{-\tfrac12 y\cdot Hy}\,d\sigma(y) = (2\pi)^{(n-1)/2} |\hat n\cdot H^{-1}\hat n|^{-1/2} |\det H|^{-1/2}\,.
  \end{equation}
\end{lemma}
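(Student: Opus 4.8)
The plan is to reduce the $(n-1)$-dimensional surface integral to a Gaussian integral over all of $\RR^n$ by inserting a Dirac delta that enforces the constraint. Since $\hat n$ is a unit vector, the coarea formula gives $\int_N f(y)\,d\sigma(y) = \int_{\RR^n} f(y)\,\delta(\hat n\cdot y)\,dy$ with no extra Jacobian, so it suffices to evaluate $\int_{\RR^n} e^{-\frac12 y\cdot Hy}\,\delta(\hat n\cdot y)\,dy$. First I would write $\delta(\hat n\cdot y) = \frac{1}{2\pi}\int_\RR e^{\mathrm{i}t\,(\hat n\cdot y)}\,dt$ and exchange the order of the $t$- and $y$-integrations (justified by working with a Gaussian regulator $e^{-\delta t^2}$ and letting $\delta\to 0$, or by noting absolute integrability once the inner integral is carried out).

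Next I would evaluate the inner integral as a shifted Gaussian: completing the square using positive-definiteness of $H$ gives $\int_{\RR^n} e^{-\frac12 y\cdot Hy + \mathrm{i}t\,(\hat n\cdot y)}\,dy = (2\pi)^{n/2}(\det H)^{-1/2}\,e^{-\frac12 t^2(\hat n\cdot H^{-1}\hat n)}$. The remaining one-dimensional Gaussian in $t$ then contributes $\int_\RR e^{-\frac12 t^2(\hat n\cdot H^{-1}\hat n)}\,dt = \sqrt{2\pi/(\hat n\cdot H^{-1}\hat n)}$, which is finite precisely because $H^{-1}$ is positive definite so $\hat n\cdot H^{-1}\hat n>0$. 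Collecting the powers of $2\pi$ yields $(2\pi)^{(n-1)/2}(\det H)^{-1/2}(\hat n\cdot H^{-1}\hat n)^{-1/2}$, which is the claim.

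An alternative, purely linear-algebraic route: pick an orthonormal basis $\{e_1,\dots,e_{n-1}\}$ of $N$ and write $y=\sum_i u_i e_i$, so the integral becomes $\int_{\RR^{n-1}} e^{-\frac12 u\cdot \tilde H u}\,du = (2\pi)^{(n-1)/2}(\det\tilde H)^{-1/2}$ with $\tilde H = P^\top H P$ and $P=[e_1\,\cdots\,e_{n-1}]$. One then needs the determinant identity $\det(P^\top H P) = \det H\,(\hat n\cdot H^{-1}\hat n)$, which follows from the Schur-complement (block-determinant) formula applied to $Q^\top H Q$ for the orthogonal matrix $Q=[P\,\hat n]$, using that $(Q^\top H Q)^{-1} = Q^\top H^{-1} Q$ has bottom-right entry $\hat n\cdot H^{-1}\hat n$.

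The main obstacle in either route is essentially bookkeeping rather than a genuine difficulty: in the first route it is justifying the interchange of integrations and tracking the normalisation of the surface measure $d\sigma$ relative to the unit normal $\hat n$ (so that no stray scaling factor sneaks in); in the second route the only nontrivial step is the block-determinant identity, and one must again confirm that expressing $y$ in an orthonormal basis of $N$ introduces no Jacobian. I would present the first route, since it is self-contained and makes the role of the positivity of $H$ and of $\hat n\cdot H^{-1}\hat n$ transparent.
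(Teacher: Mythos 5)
Your proof is correct, but it takes a genuinely different route from the paper. The paper lifts the restricted integral to the full-space Gaussian by the (non-orthogonal) change of variables $z = y + s\,H^{-1}\hat n$ with $y\in N$, $s\in\RR$: the cross term vanishes because $y\cdot\hat n=0$, the volume element contributes the Jacobian $|\hat n\cdot H^{-1}\hat n|$ via a wedge-product computation, and the restricted integral is then read off by dividing the known value $(2\pi)^{n/2}|\det H|^{-1/2}$ of the full-space integral by the resulting one-dimensional Gaussian in $s$. You instead write the surface integral as $\int_{\RR^n} e^{-\frac12 y\cdot Hy}\,\delta(\hat n\cdot y)\,dy$ (legitimate since $\hat n$ is a unit normal and $N$ passes through the origin — an assumption both you and the paper use implicitly, as is appropriate for the tangent-plane application at the saddle) and Fourier-represent the delta, trading the paper's Jacobian identity — the only delicate step in its argument — for an interchange of integrals plus completion of the square; this makes the positivity of $H$ and of $\hat n\cdot H^{-1}\hat n$ transparent, at the cost of a regularisation remark. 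Your linear-algebraic alternative, $\det(P^\top HP)=\det H\,(\hat n\cdot H^{-1}\hat n)$ via the Schur complement of $Q^\top HQ$ with $Q=[P\ \hat n]$, is a third equivalent route and is in effect a coordinate-free restatement of the same determinant identity that underlies the paper's Jacobian factor; it has the advantage of generalising cleanly to several constraints (the situation of the paper's lemma~\ref{lm:detpp} and lemma~\ref{lm:multiConserve}), where the relevant factor becomes the determinant of the Gram matrix of the normals in the $H^{-1}$ inner product. All three computations agree with the stated formula.
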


\begin{proof}
  We have
  \begin{equation}
    \label{eq:detp-aux}
    \int_{\RR^n} e^{-\tfrac12 z\cdot Hz}\,dz = (2\pi)^{n/2} |\det H|^{-1/2}\,.
  \end{equation}
  In order to obtain a formula for the restricted Gaussian integral,
  consider the coordinate change
  \begin{equation}
    z = y + sH^{-1}\hat n\quad\text{with}\quad y\in N,\quad s\in\RR\,.
  \end{equation}
  Since we can write $H^{-1}\hat n=(\hat n\cdot H^{-1}\hat n)\hat n +
  v$, where $v\in N$, we know that the change of variables yields
  \begin{equation}
    dz = d(H^{-1}\hat n)\wedge dy = |\hat n\cdot H^{-1}\hat n|\,d\sigma(y)\,ds\,.
  \end{equation}
  Here, $d\sigma(z)$ is the differential element in the hyperplane $N$. Thus,
  \begin{align*}
    \int_{\RR^n} e^{-\tfrac12 z\cdot Hz}\,dz &= |\hat n\cdot H^{-1}\hat n|\int_\RR\int_N e^{-\tfrac12 (y+sH^{-1}\hat n)\cdot H(y+sH^{-1}\hat n)}\,d\sigma(y)\,ds\\
    &= |\hat n\cdot H^{-1}\hat n|\left(\int_N e^{-\tfrac12 y \cdot H y}\,d\sigma(y)\right)\left(\int_\RR e^{-\tfrac12s^2(\hat n\cdot H^{-1}\hat n)}\,ds\right)\\
    &= (2\pi)^{1/2} |\hat n\cdot H^{-1}\hat n|^{1/2} \int_N e^{-\tfrac12 y \cdot H y}\,d\sigma(y)\,,
  \end{align*}
and via equation~(\ref{eq:detp-aux}) we arrive at the desired result.
\end{proof}

\begin{lemma}\label{lm:detpp}
  Let $N$ and $M$ be two co-dimension 1 hyperplanes in $\RR^n$ with
  normal vectors $\hat n$ and $\hat m$, respectively, and $H\in\RR^{n\times n}$
  positive definite. Then, the Gaussian integral, restricted to the
  intersection of the two hyperplanes $N\cap M$, is given by
  \begin{equation}
    \int_{N\cap M} e^{-\tfrac12 y\cdot Hy}\,dy = (2\pi)^{(n-2)/2}
    |\hat n \cdot H^{-1}\hat n|^{1/2} |\hat m \cdot H^{-1}\hat m|^{1/2} (\det H)^{1/2}
  \end{equation}
  if $\hat n$ and $\hat m$ are orthogonal in the $H^{-1}$ inner product,
  \begin{equation}
    \hat n\cdot H^{-1}\hat m=0\,.
  \end{equation}
\end{lemma}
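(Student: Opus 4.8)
The plan is to iterate the one-hyperplane reduction of Lemma~\ref{lm:detp}, exploiting the $H^{-1}$-orthogonality hypothesis to decouple the two Gaussian factors cleanly. First I would write the codimension-2 subspace $N\cap M$ as the orthogonal complement (in the standard inner product) of $\mathrm{span}\{\hat n,\hat m\}$, and split $\mathbb R^n = (N\cap M)\oplus \mathrm{span}\{H^{-1}\hat n, H^{-1}\hat m\}$. Concretely, I would introduce the change of variables
\begin{equation*}
  z = y + s\,H^{-1}\hat n + t\,H^{-1}\hat m\,,\qquad y\in N\cap M,\quad s,t\in\mathbb R\,,
\end{equation*}
which is a valid (affine, hence constant-Jacobian) parametrization provided $H^{-1}\hat n$ and $H^{-1}\hat m$ together with $N\cap M$ span all of $\mathbb R^n$; this is guaranteed since $H$ is invertible and $\hat n,\hat m$ are linearly independent (their independence follows from $\hat n\cdot H^{-1}\hat m=0$ together with $H^{-1}$ being positive definite, so $\hat n$ is not a multiple of $\hat m$). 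Following the same wedge-product computation as in Lemma~\ref{lm:detp}, the Jacobian of this substitution is $|\hat n\cdot H^{-1}\hat n|^{1/2}|\hat m\cdot H^{-1}\hat m|^{1/2}$ times $|\hat n\cdot H^{-1}\hat n\;\hat m\cdot H^{-1}\hat m - (\hat n\cdot H^{-1}\hat m)^2|^{1/2}$-type corrections — but the cross term vanishes by hypothesis, so it collapses to $|\hat n\cdot H^{-1}\hat n|\,|\hat m\cdot H^{-1}\hat m|$ (before taking the square root appearing in the final normalization).

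Next I would expand the quadratic form $z\cdot Hz$ in the new coordinates. The key algebraic point is that, because $y\in N\cap M$ means $\hat n\cdot y = \hat m\cdot y = 0$, the cross terms between $y$ and the $s,t$ directions satisfy
\begin{equation*}
  y\cdot H\,(H^{-1}\hat n) = y\cdot \hat n = 0\,,\qquad y\cdot H\,(H^{-1}\hat m) = y\cdot \hat m = 0\,,
\end{equation*}
so the $y$-part decouples entirely. The remaining $s,t$ block has entries $H^{-1}\hat n\cdot \hat n$, $H^{-1}\hat m\cdot \hat m$, and off-diagonal $H^{-1}\hat n\cdot \hat m = 0$ — again using the hypothesis — so the exponent factorizes as $-\tfrac12 y\cdot Hy - \tfrac12 s^2(\hat n\cdot H^{-1}\hat n) - \tfrac12 t^2(\hat m\cdot H^{-1}\hat m)$. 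The full integral over $\mathbb R^n$ then splits into the desired restricted integral times two one-dimensional Gaussians, each contributing a factor $(2\pi)^{1/2}(\hat n\cdot H^{-1}\hat n)^{-1/2}$ and $(2\pi)^{1/2}(\hat m\cdot H^{-1}\hat m)^{-1/2}$ respectively. Combining these with the Jacobian and with the known value $(2\pi)^{n/2}|\det H|^{-1/2}$ of the unrestricted integral from~(\ref{eq:detp-aux}), and solving for the restricted integral, yields exactly the claimed formula.

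Alternatively — and this is perhaps cleaner to write up — one could simply apply Lemma~\ref{lm:detp} twice: first restrict the ambient Gaussian on $\mathbb R^n$ to the hyperplane $N$, obtaining a Gaussian on the $(n-1)$-dimensional space $N$ with quadratic form given by $H$ restricted (appropriately) to $N$; then restrict that to $N\cap M$ viewed as a hyperplane inside $N$. The subtlety is bookkeeping: one must check that the ``effective Hessian'' on $N$ has determinant and the relevant $(H|_N)^{-1}$-contraction against $\hat m$ matching the stated answer, and this is precisely where the $H^{-1}$-orthogonality $\hat n\cdot H^{-1}\hat m = 0$ is needed — it ensures that $\hat m$, once projected into $N$, still pairs with the inverse of the restricted Hessian in the simple way $\hat m\cdot H^{-1}\hat m$ rather than picking up correction terms involving $\hat n$. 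The main obstacle, in either route, is this last verification: making precise how the inverse of the restricted quadratic form relates to $H^{-1}$, and confirming the hypothesis is exactly what kills the would-be cross terms. I expect the simultaneous two-variable substitution above to make this most transparent, since there the decoupling is manifest at the level of the exponent and no ``restricted inverse'' ever needs to be computed.
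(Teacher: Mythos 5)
Your main argument is essentially identical to the paper's proof: the same affine substitution $z = y + s\,H^{-1}\hat n + t\,H^{-1}\hat m$ with $y\in N\cap M$, the same wedge-product Jacobian whose cross term $(\hat n\cdot H^{-1}\hat m)(\hat m\cdot H^{-1}\hat n)$ vanishes by hypothesis, the same decoupling of the exponent, and the same comparison with the full-space Gaussian integral from Lemma~\ref{lm:detp}. The alternative ``apply Lemma~\ref{lm:detp} twice'' route you sketch is a genuine variation, but since you yourself defer to the simultaneous substitution, your proposal matches the paper's approach and is correct.
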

\begin{proof}
  With a similar argument as before, consider the coordinate change
  \begin{equation}
    z = y  + sH^{-1}\hat n + t H^{-1}\hat m\quad\text{with}\quad y\in N\cap M, \quad s,t\in \RR
  \end{equation}
  Then, the volume element yields
  \begin{equation}
    dz = dy\wedge d(H^{-1}\hat n)\wedge d(H^{-1}\hat m) = |(\hat n\cdot H^{-1}\hat n)(\hat m\cdot H^{-1}\hat m) - (\hat n\cdot H^{-1}\hat m)(\hat m\cdot H^{-1}\hat n)|\,d\sigma(y)\,ds\,dt\,.
  \end{equation}
  Since by assumption $\hat n\cdot H^{-1}\hat m = \hat m\cdot
  H^{-1}\hat n = 0$, we arrive at the desired result with the same in
  lemma~\ref{lm:detp}.
\end{proof}

\begin{lemma}\label{lm:Hperp}
  Let $\hat m$ be the zero eigenvector of the mobility matrix $M(x_s)$
  at the saddle $x_s$, and $\hat n$ the normal vector to the
  separatrix $\partial B$ at the saddle $x_s$. Then
  \begin{equation}
    \hat n\cdot H_s^{-1}\hat m = \hat m\cdot H_s^{-1}\hat n=0\,.
  \end{equation}
\end{lemma}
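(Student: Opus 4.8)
The plan is to reduce the statement entirely to the eigenvalue identity already established in lemma~\ref{lm:nEV}, together with the symmetry of the mobility matrix $M_s$; no new computation is really needed.

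First I would recall that lemma~\ref{lm:nEV} gives $H_s M_s\hat n = \mu_-\hat n$. Since the relevant saddle is assumed non-degenerate, $H_s$ is invertible, and since $\mu_-$ is the (single) negative eigenvalue of $M_s H_s$ we have $\mu_-\neq 0$. Hence the identity can be rearranged as
\begin{equation*}
  H_s^{-1}\hat n = \frac1{\mu_-}\, M_s\hat n\,.
\end{equation*}

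Next I would simply contract this relation with $\hat m$. Using that $M_s$ is symmetric and that $\hat m$ is by definition its zero eigenvector, $M_s\hat m = 0$, one obtains
\begin{equation*}
  \hat m\cdot H_s^{-1}\hat n = \frac1{\mu_-}\,\hat m\cdot M_s\hat n = \frac1{\mu_-}\,(M_s\hat m)\cdot\hat n = 0\,.
\end{equation*}
Finally, since $H_s$ — and therefore $H_s^{-1}$ — is symmetric, $\hat n\cdot H_s^{-1}\hat m = \hat m\cdot H_s^{-1}\hat n$, so both quantities vanish, which is the claim.

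The only points that deserve a word of justification, rather than a genuine obstacle, are the invertibility of $H_s$ (which is part of the standing non-degeneracy hypothesis on the saddle) and the non-vanishing of $\mu_-$ (it is the unique negative eigenvalue of $M_s H_s$), both of which make the rearrangement of the lemma~\ref{lm:nEV} identity legitimate.
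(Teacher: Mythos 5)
Your argument is correct and is essentially identical to the paper's: both rearrange the identity of lemma~\ref{lm:nEV} to $M_s\hat n = \mu_- H_s^{-1}\hat n$, contract with $\hat m$, and use the symmetry of $M_s$ together with $M_s\hat m=0$ (plus symmetry of $H_s^{-1}$ for the other ordering). Your added remarks on the invertibility of $H_s$ and $\mu_-\neq 0$ are fine but not a departure from the paper's route.
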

\begin{proof}
  From lemma~\ref{lm:nEV} we know
  \begin{equation}
    M_s \hat n = \mu_- H_s^{-1}\hat n\,,
  \end{equation}
  and thus
  \begin{equation}
    \hat m \cdot H_s^{-1} \hat n = \frac1{\mu_-} \hat m\cdot M\hat n = \frac1{\mu_-} \hat n\cdot M\hat m = 0\,.
  \end{equation}
\end{proof}

\begin{lemma}\label{lm:multiConserve}
    Let $N$ and $M_i$ be co-dimension 1 hyperplanes in $\RR^n$ with normal vectors $\hat n$ and $\hat{m}_i$, $i=1,\ldots,k$, respectively, and $H\in\RR^{n\times n}$ positive definite. Then, the Gaussian integral, restricted to the intersection of the hyperplanes $N\cap M_1\ldots\cap M_k$, is given by
    \begin{align}
        \int_{N\cap M_1\ldots\cap M_k}e^{-\frac{1}{2}x\cdot H x}dx &=  (2\pi)^{(n-k)/2}
        |\hat n \cdot H^{-1}\hat n|^{1/2}\\
        &\times|\hat{m}_1 \cdot H^{-1}\hat{m}_1|^{1/2}\ldots 
        |\hat{m}_k \cdot H^{-1}\hat{m}_k|^{1/2}(\det H)^{1/2},
    \end{align}
    if $\hat n$ and $\hat{m}_i$ are orthogonal in the $H^{-1}$ inner product.
\end{lemma}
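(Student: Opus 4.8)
The plan is to mirror the proofs of Lemmas~\ref{lm:detp} and~\ref{lm:detpp}, performing a single affine change of variables that peels off all $k+1$ normal directions simultaneously. Writing $\hat m_0 := \hat n$ for uniformity, the set $N\cap M_1\cap\cdots\cap M_k$ is the intersection of $k+1$ hyperplanes with unit normals $\hat m_0,\dots,\hat m_k$ which are pairwise orthogonal in the $H^{-1}$ inner product. I would introduce
\begin{equation*}
  z = y + \sum_{j=0}^{k} s_j\, H^{-1}\hat m_j\,,\qquad y \in N\cap M_1\cap\cdots\cap M_k,\quad s_0,\dots,s_k\in\RR\,,
\end{equation*}
which is a diffeomorphism from $(N\cap M_1\cap\cdots\cap M_k)\times\RR^{k+1}$ onto $\RR^n$: the directions $H^{-1}\hat m_j$ are transverse to the intersection since $H^{-1}$ is positive definite, and they are independent modulo the intersection precisely because of the $H^{-1}$-orthogonality hypothesis.

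The computation then proceeds in three steps, exactly as in Lemma~\ref{lm:detpp}. First, the volume element picks up a Jacobian which is the $(k+1)\times(k+1)$ Gram determinant of the normals in the $H^{-1}$ inner product, $dz = \bigl|\det(\hat m_i\cdot H^{-1}\hat m_j)_{i,j=0}^{k}\bigr|\,d\sigma(y)\,ds_0\cdots ds_k$, which by the orthogonality hypothesis is diagonal and hence equals $\prod_{j=0}^k(\hat m_j\cdot H^{-1}\hat m_j)\,d\sigma(y)\,ds_0\cdots ds_k$. Second, expanding the quadratic form, the cross terms $y\cdot H(H^{-1}\hat m_j) = y\cdot\hat m_j$ vanish because $y$ lies in every hyperplane, and the mixed terms $(H^{-1}\hat m_i)\cdot H(H^{-1}\hat m_j) = \hat m_i\cdot H^{-1}\hat m_j$ vanish for $i\neq j$ by hypothesis, so that $z\cdot Hz = y\cdot Hy + \sum_{j=0}^k s_j^2\,(\hat m_j\cdot H^{-1}\hat m_j)$ and the integral factorizes completely. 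Third, substituting into the known normalization $\int_{\RR^n}e^{-\frac12 z\cdot Hz}\,dz = (2\pi)^{n/2}(\det H)^{-1/2}$ (equation~(\ref{eq:detp-aux})), evaluating the $k+1$ one-dimensional Gaussian integrals $\int_\RR e^{-\frac12 c\, s^2}\,ds = (2\pi/c)^{1/2}$, and solving for the restricted integral yields the stated formula, with Lemmas~\ref{lm:detp} and~\ref{lm:detpp} recovered as the cases $k=0$ and $k=1$.

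The step requiring the most care is the Jacobian: the coordinate change is not orthogonal in the Euclidean sense, so a priori the volume distortion is a fully coupled $(k+1)\times(k+1)$ determinant, and it is only the $H^{-1}$-orthogonality of the normals that — together with the decoupling of the quadratic form — makes everything collapse to a clean product; one must in particular ensure that the $\hat m_j$ can indeed be taken pairwise $H^{-1}$-orthogonal (equivalently, choose such a basis of the conserved-mode space) and, for the application at the saddle, invoke Lemma~\ref{lm:Hperp} so that $\hat n$ sits on the same footing as the $\hat m_i$. An equivalent and more transparent route is to whiten first via $z = H^{-1/2}\zeta$: this sends $z\cdot Hz$ to $|\zeta|^2$ with Jacobian $(\det H)^{-1/2}$, maps each normal $\hat m_j$ to $H^{-1/2}\hat m_j$, and — the key point — converts the $H^{-1}$-orthogonality $\hat m_i\cdot H^{-1}\hat m_j = 0$ into genuine Euclidean orthogonality of the transformed normals. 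Then $\RR^n$ splits as an orthogonal direct sum of the transformed intersection and the span of the transformed normals, the Gaussian factorizes by inspection into a $(2\pi)^{(n-k-1)/2}$ contribution over the transformed intersection times decoupled one-dimensional Gaussians, and all that remains is to track the volume distortion of $H^{-1/2}$ restricted to the intersection; the algebra from there is routine.
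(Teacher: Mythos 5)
Your proof is correct and is essentially the paper's own argument: the same change of variables $z = y + sH^{-1}\hat n + \sum_i t_i H^{-1}\hat m_i$ with $y$ in the intersection, the Gram Jacobian collapsing to a diagonal product by the $H^{-1}$-orthogonality, and reduction to the full-space Gaussian exactly as in Lemmas~\ref{lm:detp} and~\ref{lm:detpp}; the whitening route you sketch is a fine but inessential variant. One remark: carrying your computation through actually yields $(2\pi)^{(n-k-1)/2}\,(\det H)^{-1/2}\prod_{j=0}^{k}\left|\hat m_j\cdot H^{-1}\hat m_j\right|^{-1/2}$ (with $\hat m_0=\hat n$), i.e.\ inverse powers and codimension $k+1$ consistent with Lemma~\ref{lm:detp}, so the positive exponents and the $(2\pi)^{(n-k)/2}$ in the printed statement (as in Lemma~\ref{lm:detpp}) are typographical slips rather than what your argument, or the paper's, in fact produces.
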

\begin{proof}
    With a similar argument as before in lemma~\ref{lm:detpp}, consider the coordinate change
    \begin{equation*}
        z = y+sH^{-1}\hat n+\sum_{i=1}^kt_iH^{-1}\hat{m}_i\quad \text{with}\quad y\in N\cap M_1\ldots\cap M_k, \quad s,t_i\in \RR.
    \end{equation*}
    Then, by the assumption that $\hat n$ and $\hat{m}_i$ are orthogonal in the $H^{-1}$ inner product, we have the volumen element
    \begin{align*}
        dz &= dy\wedge d(H^{-1}\hat{n})\wedge d(H^{-1}\hat{m}_1)\ldots\wedge d(H^{-1}\hat{m}_k) \\
        &= |(\hat n\cdot H^{-1}\hat n)(\hat{m}_1\cdot H^{-1}\hat{m}_1)\ldots (\hat{m}_k\cdot H^{-1}\hat{m}_k)|d\sigma(y)\,ds\,dt_1\ldots dt_k,
    \end{align*}
    and we arrive at the desired result with the same in lemma~\ref{lm:detpp}.
\end{proof}

\section{Hessian as second variation}
\label{app:hessian}

In this section we sketch a formal derivation of Hessian of a given
energy functional. We first show the derivation of the Hessian of the
energy functional of the STF. The functional derivative of the energy
functional of STF~(\ref{eq:STFE-energy}) is given
by~\cite{parr-yang:1995} (here we omit in our notation the
$t$-dependence of $h$)
\begin{equation}
\label{eq:app_1st}
    \frac{\delta E[h(x)]}{\delta h(x)} = \frac{4\pi^2}{3h(x)^3} - \frac{\partial^2 h(x)}{\partial x^2}.
\end{equation}
The Hessian we are looking for is formally the functional derivative
of the functional derivative. If we rewrite
equation~(\ref{eq:app_1st}) as an integral,
\begin{align}
    \frac{\delta E[h(x)]}{\delta h(x)} &= F[h(x)] = \int_0^1 \left(\frac{4\pi^2}{3 h(x')^3}-\frac{\partial^2 h(x')}{\partial x'^2}\right)\delta(x'-x)dx' \\
    &= \int_0^1 f(x',h(x'),\frac{\partial^2 h(x')}{\partial x'^2})dx'\,,
\end{align}
we can again use the definition of functional derivative~\cite{parr-yang:1995} to get
\begin{align}
    \int_0^1\frac{\delta F[h(x)]}{\delta h(x)}\xi(x)dx &= \left\{\frac{d}{d\epsilon}(F[h(x)+\epsilon\xi(x)])\right\}_{\epsilon=0}\\
    &=\int_0^1\frac{\partial f}{\partial h}\xi(x') + \frac{\partial^2 f}{\partial (\partial^2_{x'}h(x'))^2}\frac{\partial^2\xi(x')}{\partial x'^2}dx'\\
    &=\int_0^1 \xi(x')\left(\frac{\partial f}{\partial h} + \frac{\partial^2}{\partial x'^2}\frac{\partial^2 f}{\partial (\partial^2_{x'}h(x'))^2}\right)dx'\\
    &=\int_0^1\xi(x')\left(-\frac{4\pi^2}{h(x')^4}\delta(x'-x)-\frac{\partial^2}{\partial x'^2}\delta(x'-x)\right)dx'\\
    &=-\frac{4\pi^2}{h(x)^4}\xi(x) - \frac{\partial^2\xi(x)}{\partial x^2}\\
    &=\int_0^1\frac{\delta^2 E[h(x)]}{\delta h(x)^2}\xi(x)dx\,.
\end{align}
Here $\xi(x)$ is a periodic test function, the third line used
integration by parts, and the fourth line used the properties of Dirac
delta functional and its derivatives. The Hessian, $\delta^2
E[h(x)]/\delta h(x)^2$, can be interpreted as an operator on $\xi(x)$,
and thus can be discretised and calculated numerically.

Similarly, we can calculate the gradient and the Hessian of the energy
functional of the urban segregation
model. Equations~(\ref{eq:SEG_energy}) and~(\ref{eq:SEG_conv}) give us
the following energy functional
\begin{equation}
    E[\rho] = \int_0^1\left(\rho\log \rho + (1-\rho)\log(1-\rho) - \tfrac12 C\rho^2 -\frac{\kappa^2}{4}C\rho \frac{\partial^2\rho}{\partial x^2}\right)dx\,.
\end{equation}
The functional derivative of $E[\rho]$ is
\begin{equation}
    \frac{\delta E[\rho]}{\delta \rho} = \log(\rho)-\log(1-\rho)-C\rho-\frac{\kappa^2}{2}C\frac{\partial^2\rho}{\partial x^2}\,,
\end{equation}
or in its integral form
\begin{align}
    \frac{\delta E[\rho]}{\delta \rho} &= F[\rho] = \int_0^1f(x',\rho(x'),\frac{\partial^2\rho(x')}{\partial x'^2}) \\
    &=\int_0^1\left(\log(\frac{\rho(x')}{1-\rho(x')})-C\rho(x')-\frac{\kappa^2}{2}C\frac{\partial^2\rho(x')}{\partial x'^2}\right)\delta(x'-x)dx'\,.
\end{align}
And the Hessian of $E[\rho]$ is given by
\begin{align}
    &\int_0^1\frac{\delta F[\rho(x)]}{\delta \rho(x)}\xi(x)dx = \int_0^1\frac{\partial f}{\partial \rho}\xi(x') + \frac{\partial^2 f}{\partial (\partial^2_{x'}\rho(x'))^2}\frac{\partial^2\xi(x')}{\partial x'^2}dx'\\
    =&\int_0^1\xi(x')\left((\frac{1}{\rho(x')}+\frac{1}{1-\rho(x')}-C)\delta(x'-x)-\frac{\kappa^2}{2}C\frac{\partial^2}{\partial x'^2}\delta(x'-x)\right)dx'\\
    =&\left(\frac{1}{\rho(x)}+\frac{1}{1-\rho(x)}-C\right)\xi(x)-\frac{\kappa^2}{2}C\frac{\partial^2 \xi(x)}{\partial x^2}\\
    =&\int_0^1\frac{\delta^2E[\rho(x)]}{\delta\rho(x)^2}\xi(x)dx.
\end{align}

\section{Local approximation of convolution}
\label{app:expan_conv}
In this section we show the local approximation of convolution with a
Gaussian kernel. Given a Gaussian kernel with variance $\kappa^2$,
\begin{equation}
  K(x) = \frac{1}{\sqrt{2\pi}\kappa}\exp\left(-\frac{x^2}{2\kappa^2}\right),
\end{equation}
we first show that its Fourier transform is also a Gaussian, that is
\begin{align}
  \hat{K}(k) = \int_{-\infty}^\infty \exp\left(-ikx\right)K(x)dx = \exp\left(-\frac{\kappa^2 k^2}{2}\right).
\end{align}
Differentiate the Gaussian kernel gives
\begin{equation}
  \frac{d K}{dx} = -\frac{x}{\kappa^2}K(x).
\end{equation}
Fourier transform on both side gives
\begin{equation}
  ik\hat{K}(k) = \frac{1}{i\kappa^2}\frac{d\hat{K}}{d k},
\end{equation}
and so
\begin{equation}
  \frac{1}{\hat{K}}\frac{d\hat{K}}{dk} = -k\kappa^2.
\end{equation}
Integrating both side from $0$ to $k$ gives
\begin{equation}
  \ln(\hat{K}(k))-\ln(\hat{K}(0)) = -\frac{k^2\kappa^2}{2}.
\end{equation}
Since the Gaussian kernel is normalised, we know that
\begin{equation}
  \hat{K}(0) = \int_{-\infty}^{\infty}K(x)\exp(0)dx = 1,
\end{equation}
and so we have the desired result. Assuming $\kappa^2 \ll 1$, we can then Taylor expand $\hat K = 1-\kappa^2 k^2/2+\mathcal O(\kappa^4)$, and so
\begin{equation}
  K(x)=\frac{1}{2\pi}\int_{-\infty}^{\infty}\hat K(k)\exp(ikx)dk =  \frac{1}{2\pi}\int_{-\infty}^{\infty}\left(1-\frac{\kappa^2}{2}k^2+\mathcal O(\kappa^4)\right)\exp(ikx)dk.
\end{equation} 
Since the Gaussian kernel is symmetric, the convolution is given by
\begin{align}
  K\star \rho &= \int_{-\infty}^{\infty} \rho(y)K(x-y) dy \\
  &=\int_{-\infty}^{\infty}\rho(y)\frac{1}{2\pi}\int_{-\infty}^{\infty}\left(1-\frac{\kappa^2}{2}k^2+\mathcal O(\kappa^4)\right)\exp(ik(x-y))dkdy\\
  &=\int_{-\infty}^{\infty}\frac{1}{2\pi}\hat \rho(k)\exp(ikx)dk - \frac{1}{2\pi}\int_{-\infty}^{\infty}\frac{\kappa^2}{2}k^2\hat \rho(k)\exp(ikx) dk + \mathcal O(\kappa^4)\\
  &=\rho(x)+\frac{\kappa^2}{2}\partial_x^2 \rho(x) + \mathcal O(\kappa^4).
\end{align}
\end{document}